\documentclass[aps,pra,notitlepage,preprintnumbers,11pt,tightenlines]{revtex4-1}
\usepackage{amsmath}
\usepackage{amsthm}
\usepackage{amssymb}
\usepackage{bbold}
\usepackage{enumerate}
\usepackage{graphicx}
\usepackage{hyperref}
\usepackage{xspace}
\usepackage{color}
\usepackage[caption=false]{subfig}

\newcommand{\nc}{\newcommand}
\nc{\rnc}{\renewcommand}
\nc{\todo}[1]{\textcolor{red}{todo: #1}}
\nc{\<}{\langle}
\rnc{\>}{\rangle}

\nc{\cS}{\mathcal{S}}

\newcommand{\bra}[1]{\langle #1 |}
\newcommand{\ket}[1]{| #1 \rangle}
\newcommand{\eps}{\epsilon}
\renewcommand{\AA}{\mathcal{A}}
\newcommand{\CC}{\mathbb{C}}
\newcommand{\bbC}{\mathbb{C}}
\newcommand{\bbR}{\mathbb{R}}
\newcommand{\HH}{\mathcal{H}}
\newcommand{\PP}{\mathbb{P}}
\renewcommand{\P}{\mathcal{P}}
\newcommand{\RR}{\mathbb{R}}
\newcommand{\UU}{\mathcal{U}}
\newcommand{\VV}{\mathcal{V}}
\newcommand{\WW}{\mathcal{W}}
\newcommand{\XX}{\mathcal{X}}
\newcommand{\pd}{\partial}

\newcommand{\ident}{\mathbb{1}}
\newcommand{\ot}{\otimes}
\newcommand{\be}{\begin{equation}}
\newcommand{\ee}{\end{equation}}
\newcommand{\ra}{\to}
\newcommand{\proj}[1]{\ket{#1} \bra{#1}}
\DeclareMathOperator{\ProdSym}{ProdSym}
\DeclareMathOperator{\Sep}{Sep}
\DeclareMathOperator{\Tr}{Tr}
\DeclareMathOperator{\rank}{rank}
\DeclareMathOperator{\poly}{poly}
\DeclareMathOperator{\LT}{LT}
\DeclareMathOperator{\conv}{conv}
\DeclareMathOperator{\E}{\mathbb{E}}
\DeclareMathOperator{\diag}{diag}
\newtheorem{theorem}{Theorem}
\newtheorem{definition}[theorem]{Definition}
\newtheorem{lemma}[theorem]{Lemma}
\newtheorem{corollary}[theorem]{Corollary}
\newtheorem{proposition}[theorem]{Proposition}

\newcommand{\thmref}[1]{Theorem~\ref{thm:#1}}
\nc\eq[1]{(\ref{eq:#1})}

\newcommand{\problemmacro}[1]{\texorpdfstring{\textsc{#1}}{#1}\xspace}
\newcommand{\threesat}{\problemmacro{$3$-Sat}}

\begin{document}
\title{An improved semidefinite programming hierarchy for testing entanglement}

\author{Aram W. Harrow}
\author{Anand Natarajan}
\author{Xiaodi Wu}
\affiliation{MIT Center for Theoretical Physics}

\begin{abstract}

  We present a stronger version of the Doherty-Parrilo-Spedalieri (DPS) hierarchy of approximations for the set of separable states.  Unlike DPS, our hierarchy converges exactly at a finite number of rounds for any fixed input dimension.  This yields an algorithm for separability testing which is singly exponential in dimension and polylogarithmic in accuracy.  Our analysis makes use of tools from algebraic geometry, but our algorithm is elementary and differs from DPS only by one simple additional collection of constraints.

\preprint{MIT-CTP/4587}
\end{abstract}

\maketitle

\section{Introduction}\label{sec:intro}

Entanglement is one of the key features that distinguishes quantum
information from classical information.  One particularly basic and
important problem in the theory of entanglement is to determine
whether a given mixed state $\rho$ is entangled or separable.  Via
standard techniques of convex optimization, this problem is roughly
equivalent to maximizing a linear function over the set of separable
states~\cite{gls:1993,liu:2007}.  Indeed, it has close relations with a variety of
problems, including estimating channel capacities, analyzing
two-prover proof systems, finding the ground-state energy in the
mean-field approximation, finding the least entangled pure state in a
subspace, etc. as well as problems not obviously related to quantum
mechanics such as planted clique, the unique games problem and
small-set expansion~\cite{HM13}.

However, there is no simple test for determining whether a state is
entangled.  Indeed not only are tests such as the PPT (positive
partial transpose) condition known to have arbitrarily large
error~\cite{BeigiS10}, but computational hardness results show that
any test implementable in time polynomial in the dimension must be
highly inaccurate, given the plausible assumption that 3-SAT requires
exponential time~\cite{HM13,GallNN11}.
These limitations indicate that separability tests cannot be as
efficient as, say, a test for correlation, or a calculation of the
largest eigenvalue of a matrix.  

The main open question is whether algorithms exist that match these
hardness results, or whether further hardness results can be found.
The two leading algorithmic frameworks are $\eps$-nets and
semidefinite programming (SDP) hierarchies.  There are two regimes in
which these come close to matching the known hardness results.  Let
$n$ denote the dimension of the states we examine.  Informally
speaking, the well-studied regimes are the constant-error regime,
where there are both algorithms and hardness results with time
$n^{\Theta(\log n)}$ (although important caveats exist, discussed below),
and the $1/\poly(n)$ regime, where the algorithms and hardness results
together suggest that the complexity is exponential in $n$.

In this paper we consider the regime of much lower error.
Specifically, if $\eps$ is the error allowed, we will focus on the
scaling of error with $\eps$ rather than $n$.  In other settings, such
as infinite translationally invariant Hamiltonians, it is possible for
the complexity to grow rapidly with $1/\eps$ even for fixed local
dimension~\cite{CPW14}.  Another example closer to the current
work is \cite{ItoKW12}, which showed that approximating quantum
interactive proofs to high accuracy (specifically with the bits of
precision polynomial in the message dimension) corresponds to the complexity
class $\mathsf{EXP}$ rather than $\mathsf{PSPACE}$. However, for
separability testing or for the corresponding complexity class $\mathsf{QMA(2)}$,
we will give evidence that the complexity does not increase when $\eps$ becomes
exponentially small in the dimension.\footnote{On the other hand, there is
evidence that there is a separation between $\mathsf{QMA}(2)$ with
constant error and with error scaling as an inverse polynomial in the
dimension.~\cite{pereszlenyi:2012}.}

Our main contribution is to describe a pair of classical algorithms
for the separability problem.  In the high-accuracy limit both run in
time $\exp(\poly(n))\poly\log(1/\eps)$.  One is based on quantifier
elimination~\cite{Basu96} and is simple, but does not appear to yield
new insights into the problem.  The second algorithm is based on an
SDP hierarchy due to Doherty, Parrilo and Spedalieri
(DPS)~\cite{dps:2003}.  Like DPS, our algorithm runs in time
$n^{O(k)}$ (or more precisely $\poly(\binom{n+k-1}{k})$) for what is
called the $k^{\text{th}}$ ``level'' of the hierarchy.  As $k$ is
increased our algorithm, like that of DPS, becomes more accurate.
Indeed, for any fixed value of $k$ our algorithm performs at least as
well as that of DPS.  However, unlike DPS, our hierarchy always
converges exactly in a finite number of steps, which we can upper
bound by $\exp(\poly(n))$.  Taking into account numerical error yields
an algorithm again running in time $\exp(\poly(n))\poly\log(1/\eps)$.
Thus our algorithm is, for the first time, a single SDP hierarchy
which matches or improves upon the best known performance of previous
algorithms at each scale of $\eps$.

The fact that our algorithm is a semidefinite program gives it further
advantages. One very useful property of semidefinite programs is
duality. In our algorithm, both the primal and dual problems have useful
interpretations in terms of quantum information. On the primal side,
our algorithm can be viewed as searching over symmetric mixed states over
an extended system obtained by adding copies of the individual subsystems. In this light, our convergence bounds can be viewed as
new monogamy relations: we show that if a state is symmetric under
exchange of subsystems and satisfies certain other conditions, then if
there are enough copies of each subsystem, then none of the subsystems
can be entangled with each other. On the dual side, every feasible point of the
dual is an entanglement witness operator. Indeed, our algorithm yields
a new class of
entanglement witnesses, as discussed in
Section~\ref{sec:witness}. Duality is also useful in practice, since a
feasible solution to the dual can certify the correctness of the
primal, and vice versa.

SDP hierarchies are also used for discrete optimization problems, such
as integer programming~\cite{barak:2014}.  In that case, it is known that the $n^{\text{th}}$ level of
most SDP hierarchies provides the exact answer to optimization
problems on $n$ bits (e.g.~see Lemma 2.2 of \cite{barak:2014}).  By
contrast, neither the DPS hierarchy nor the 
more general Sum-of-Squares SDP hierarchy will converge exactly at any
finite level for general objective functions~\cite{dps:2003}.  Our result can be seen
as a continuous analogue of the exact convergence achievable for
discrete optimization.  

The main idea of our algorithm is that entanglement testing can be
viewed as a convex optimization problem, and thus the solution should
obey the KKT (Karush-Kuhn-Tucker) conditions.  Thus we can WLOG add
these as constraints.  It was shown
in \cite{nie:2010} that for general polynomial optimization problems,
adding the KKT conditions yields an SDP hierarchy with finite
convergence. Moreover, the number of levels necessary for convergence
is a function only of the number of variables and the degrees of the
objective and constraint polynomials. However, the proof of
convergence presented in \cite{nie:2010} gives a very high bound on the number of levels (triply
exponential in $n$ or worse). In contrast, we obtain a bound in the
number of levels that is singly exponential in $n$. We use tools from
algebraic geometry (B\'{e}zout's and Bertini's Theorem) to show that
generically, adding
the KKT conditions reduces the feasible set of our optimization
problem to isolated points. Then, using tools from
computational algebra (Gr\"{o}bner bases), we show that low levels of
the SDP hierarchy can effectively search over this finite
set. Although we use genericity in the analysis, our algorithm works
for all inputs. 

While some of these techniques have been used to analyze SDP
hierarchies in the past, they have generally not been applied to the
problems arising in quantum information.  We hope that they find
future application to understanding entanglement witnesses, monogamy
of entanglement and related phenomena.


Our main contribution is an improved version of the DPS hierarchy
which we describe in Section~\ref{sec:results}.   It is always at
least as stringent as the DPS hierarchy, and in Theorem~\ref{thm:main}
we show that it outperforms DPS by converging exactly at a finite
level, depending on the input dimension. We also present numerical
evidence in Section~\ref{sec:numerics} that the improved hierarchy outperforms DPS even at the
lowest nontrivial level for systems of small dimension.
\section{Background}\label{sec:background}
\subsection{Separability testing}
This section introduces notation and reviews previous work on the complexity of the separability testing
problem.  Define $\Sep(n,k) := \conv \{\proj{\psi_1} \ot \cdots \ot
\proj{\psi_k} : \ket{\psi_1},\ldots,\ket{\psi_k} \in B(\bbC^n)\}$,
where $\conv(S)$ denotes the convex hull of a set $S$ (i.e. the set of
all finite convex combinations of elements of $S$) and $B(V)$ denotes
the set of unit vectors in a vector space $V$.  States in $\Sep(n,k)$
are called separable, and those not in $\Sep(n,k)$ are entangled.
Given a Hermitian matrix
$M$, we define
\be h_{\Sep(n,k)}(M) := \max\{ \Tr[M\rho] : \rho\in \Sep(n,k)\}.
\label{eq:hSep}\ee
We will often abbreviate
$\Sep := \Sep(n,2)$ where there is no ambiguity.
More generally if $K$ is a convex set, we can define $h_K(x) :
=\max\{\langle x,y\rangle : y \in K \}$.  

A classic result in convex optimization~\cite{gls:1993} holds that
approximating $h_K$ is roughly equivalent in difficulty to the weak
membership problem for $K$: namely, determining whether $x\in K$ or
whether $\text{dist}(x,K)>\eps$ given the promise that one of these
holds.  This was strengthened in the context of the set $\Sep$ by
Gharibian~\cite{gharibian10} to show that this equivalence holds when
$\eps \leq 1/\poly(n)$.  Thus, in what follows we will treat
entanglement testing (i.e. the weak membership problem for $\Sep$) as
equivalent to the optimization problem in \eqref{eq:hSep}.

\subsubsection{Related problems}\label{sec:related}
A large number of other optimization problems are also equivalent
to $h_{\Sep}$, or closely related in difficulty.  Many of these are
surveyed in \cite{HM13}.  One that will particularly useful will be
the optimization problem $h_{\ProdSym(n,k)}$, defined in terms of the set
$\ProdSym(n,k) := \conv\{(\proj{\psi})^{\ot k} : \ket\psi\in
B(\bbC^n)\}$.  In Corollary 14 of \cite{HM13} (see specifically explanation
(2) there) it was proven that for any $n^2$-dimensional $M$ there exists
$M'$ with dimension $4n^2$ satisfying
\be  h_{\ProdSym(2n,2)}(M') = \frac{1}{4} h_{\Sep(n,2)}(M).\ee
Thus an algorithm for $h_{\ProdSym}$ implies an algorithm of similar
complexity for $h_{\Sep}$.   In the body of our paper, we will
describe an algorithm for the mathematically simpler $h_{\ProdSym}$,
with the understanding that it also covers the more widely used
$h_{\Sep}$.

We will not fully survey the applications of separability testing, but
briefly mention two connections.  First, $h_{\Sep(2^n,k)}$ is closely related
to the complexity class $\mathsf{QMA_n(k)}$ in which $k$ unentangled provers
send $n$-qubit states to a verifier.  If the verifier's measurement is
$M$ (which might be restricted, e.g. by being the result of a short
quantum circuit) then the maximum acceptance probability is precisely
$h_{\Sep(2^n,k)}(M)$.  Thus the complexity of $h_{\Sep}$ is closely
related to the complexity of multiple-Merlin proof systems.  See
\cite{AMM} for a classical analogue of these proof systems, and a
survey of recent open questions.

Second, $h_{\Sep}$ is closely related to the  problems of estimating the $2\ra 4$ norm
of a matrix, finding the least-expanding small set in a graph and
estimating the optimum value of a unique game~\cite{BBHKSZ12}. These
problems in turn relate to the approximation complexity of constraint
satisfaction problems, which are an extremely general class of
discrete optimization problems.  They are
currently known only to be of intermediate complexity
(i.e. only subexponential-time algorithms are known), and are the subject of intense
research.  One of leading approaches to these problems has been SDP
hierarchies, but here too it is generally unknown how well these
hierarchies perform or which features are important to their success.

\subsubsection{Previous algorithms and hardness results}\label{sec:previous-sep}
Algorithms and hardness results for estimating $h_{\Sep(n,2)}(M)$ can
be classified by (a) the approximation error $\eps$, and (b)
assumptions (if any) for the matrix $M$.  In what follows we will
assume always that $0 \leq M \leq I$.  Define $\threesat[m]$ to be the
problem of solving a 3-SAT instance with $m$ variables and $O(m)$
clauses.  The exponential-time hypothesis (ETH)~\cite{ImpagliazzoPZ98}
posits that $\threesat[m]$ requires time $2^{\Omega(m)}$ to solve.

The first group of hardness
results~\cite{gurvits:2003,BT09,BeigiS10,ChiesaF13,GallNN11} for $h_{\Sep(n,2)}$
have $\eps \sim 1/\poly(n)$ and yield reductions from $\threesat[n]$.
The strongest of these results~\cite{GallNN11} achieves this with $\eps
\sim 1/n\poly\log(n)$.  As discussed above, there are algorithms that
come close to matching this.  Taking $k=n/\sqrt{\eps}$ in the DPS
hierarchy achieves error $\eps$ (see \cite{NOP09}) in time
$(n/\sqrt{\eps})^{O(n)}$, which is $n^{O(n)}$ when $\eps =
1/\poly(n)$.  An even simpler algorithm is to enumerate over an
$\eps$-net over the pure product states on $\bbC^n \ot \bbC^n$.  Such
a net has size $(1/\eps)^{O(n)}$, which again would yield a run-time
of $n^{O(n)}$ if $\eps = 1/\poly(n)$.
Thus neither algorithm nor the hardness result could
be significantly improved without violating the ETH.  However, the
value of $\eps$ in the hardness result could conceivably be reduced.

The second body of work has concerned the case when $\eps$ is a
constant.  Here the existing evidence points to a much lower complexity.
Constant-error approximations for $h_{\Sep(n,\sqrt{n}\poly\log(n))}(M)$
were shown to be as hard as $\threesat[n]$ in \cite{AaronsonBDFS08}
and in \cite{chen10} this was shown to still hold when $M$ is a Bell
measurement (i.e. each system is independently measured and the
answers are then classically processed).  This was extended to
bipartite separability in \cite{HM13} which showed the $\threesat[n]$-hardness of
approximating 
$h_{\Sep(\exp(\sqrt{n}\poly\log(n)),2)}(M)$ to constant accuracy.
There it was shown that $M$ could be taken to be separable (i.e. of
the form $\sum_i A_i \ot B_i$ with $A_i,B_i \geq 0$) without loss of
generality.  Scaling down this means that $h_{\Sep(n,2)}$ requires
time $n^{\tilde\Omega(\log(n))}$ assuming the ETH.  On the algorithms side, $O(\log(n)/\eps^2)$ levels of the
DPS hierarchy are known~\cite{brandao11,LW14,BH-local} to suffice when
$M$ is a 1-LOCC measurement (i.e. separable with the extra assumption
that $\sum_i A_i \leq I$).   This also yields a runtime of
$n^{O(\log(n)/\eps^2)}$, but does not match the hardness result of
\cite{HM13} because of the 1-LOCC assumption.  Similar results are
also achievable using $\eps$-nets~\cite{shi12,BH14}.  One setting
where the hardness result is known to be tight is when there are many
provers.  When $M$ is implemented by $k-1$ parties measuring locally
and sending a message to the final party, \cite{BH-local} showed that
DPS could approximate the value of $h_{\Sep(n,k)}(M)$ in time
$\exp(k^2\log^2(n)/\eps^2)$.  This nearly matches the hardness result
of \cite{chen10} described above.  The same runtime was recently shown to work
for a larger class of $M$ in \cite{LS14}.

\subsection{Sum-of-squares hierarchies}\label{sec:sos}

Here we review the general method of sum-of-squares relaxations for
polynomial optimization problems. In this section, all variables are
real and all polynomials have real coefficients, unless otherwise
stated.  To start with, let $g_1(x), \dots g_k(x)$ be polynomials in
$n$ variables and define
$V(I) = \{ x \in \RR^n: \forall_i g_i(x) = 0\}$.  This notation
reflects the fact that $V(I)$ is the variety corresponds to the ideal
$I$ generated by $g_1(x),\ldots,g_m(x)$; see
Appendix~\ref{subsec:alg-geom-basics} for definitions and more
background on algebraic geometry.

Now given another polynomial $f(x)$, suppose we would like to prove
that $f(x)$ is nonnegative for all $x\in V(I)$. One way to do this would be write
$f$ as
 \be f(x) = \sum_j a_j(x)^2 + \sum_i b_i(x) g_i(x), \ee 
for
polynomials $\{a_j(x)\}, \{b_i(x)\}$.  The first term on the RHS is a
sum of squares, and is thus non-negative everywhere, while the second
term is zero everywhere on $V(I)$. Thus, if such a decomposition for
$f(x)$ exists, it must be nonnegative on $V(I)$. Such a decomposition is
thus called a \emph{sum-of-squares (SOS) certificate} for the
nonnegativity of $f$ on $V(I)$.

A natural question to ask is whether all nonnegative polynomials on
$S$ have a SOS certificate. A positive answer to this question is
provided under certain conditions by Putinar's Positivstellensatz
\cite{putinar:1993}.   One such condition is the 
 {\em Archimedean condition}, which asserts that there exists a constant
$R>0$ and a sum-of-squares polynomial $s(x)$ such that
\be R - \sum_i x_i^2 - s(x) \in I .\ee
Equivalently we could say that there is a SOS proof of $x\in V(I)
\Rightarrow \sum_i x_i^2
\leq R$.
This condition generally holds whenever $V(I)$ is a
manifestly compact set.  In this case, we have the following
formulation of Putinar's Positivstellensatz from Theorem A.4 of \cite{nie:2010}.
\begin{theorem}[Putinar]
  Let $I$ be a polynomial ideal satisfying the Archimedean condition
  and $f(x)$ a
  polynomial with $f(x) > 0$ for all $x \in V(I) \cap \RR^n$.  Then there
  exists a sum-of-squares polynomial $\sigma(x)$ and a real
  polynomial $g(x) \in I$ such that
  \[ f(x) = \sigma(x) + g(x). \]
  \label{thm:psatz}
\end{theorem}
Neither Putinar's Positivstellensatz, nor the Archimedean condition, put any bound on the
degree of the SOS certificate. Now suppose we would like to solve a general polynomial optimization
problem:
\be \begin{aligned}
  \label{eq:polyopt}
  &\max && f(x) \\
  &\text{subject to} && g_i(x) = 0 \; \forall i
  \end{aligned}. \ee
 We can rewrite this in terms of polynomial positivity as follows:
\be \begin{aligned}
  &\min && \nu \\
  &\text{such that} &&\nu - f(x) \geq 0 \\
  &\text{whenever} &&g_i(x) = 0 \; \forall i
\end{aligned}. \ee
Now, if the ideal $\langle \{g_i(x)\} \rangle$ generated by the
constraints obeys the Archimedean condition, then
Putinar's Positivstellensatz means that this problem is equivalent to
\be \label{eq:sos_program} \begin{aligned}
  &\min && \nu \\
  &\text{ such that} && \nu - f(x) = \sigma(x) + \sum_{i} b_i(x)
  g_i(x),
  \end{aligned} 
  \ee
  where $\sigma(x)$ is SOS and the polynomials $b_i(x)$ are arbitrary.
  If we allow $\sigma(x)$ and $b_i(x)$ to have arbitrarily high
  degrees, then the problem in this form is exactly equivalent to the
  original problem, but it involves optimizing over an
  infinite number of variables. However, if we limit the degrees, so
  that $\deg(\sigma(x)), \deg(b_i(x) g_i(x)) \leq 2D$ for some integer
  $D$, then we obtain a problem over a finite number of variables. As we
  increase $D$, we get a hierarchy of optimization problems over
  increasingly more variables, which must converge to the
  original problem. 

  It remains to show how to perform the optimization over a degree-$2D$
  sum of squares certificate. It turns out that this optimization can
  be expressed as a \emph{semidefinite program}. The idea is that any
  polynomial $g(x)$ of degree $2D$ can be represented as a quadratic
  form $m^T Q m$, where $m$ is the vector of monomials of degree up to
  $2D$. Moreover, the polynomial $g(x)$ is SOS iff the matrix $Q$ of
  the corresponding quadratic form is positive semidefinite.   One
  direction of this equivalence is as follows.
  If $g(x) = \sum_i h_i(x)^2$ then each $h_i(x) = \langle \vec h_i,
  m\rangle$ for some vector $\vec h_i$, and we have $Q = \sum_i \vec
  h_i \vec h_i^T$.    The reverse direction follows from the fact that
  any psd $Q$ can be decomposed in this way.

  The SDP associated with the optimization in \eqref{eq:sos_program}
  is \be \begin{aligned}
    &\min_{\nu, b_{i\alpha} \in \RR} && \nu \\
    &\text{such that} && \nu A_0 - F - \sum_{i \alpha} b_{i\alpha}
    G_{i\alpha} \succeq 0.
    \end{aligned} \ee
    Here $A_0$ is the matrix corresponding to the constant polynomial
    $1$, $F$ is the matrix corresponding to $f(x)$, $\alpha$ is a
    multi-index labeling monomials, and $G_{i\alpha}$ is the matrix
    representing the polynomial $x_1^{\alpha_1} \dots x_n^{\alpha_n}
    g_i(x)$. These matrices have dimension $m
    \times m$, where $m$ is the number of monomials of degree at most
    $D$. For $n$ variables, $m = \binom{n + D}{D}$. There exist
    efficient algorithms to solve SDPs: if desired numerical
    precision is $\epsilon$, and all feasible solutions have norm
    bounded by a constant $R$, then the running time for an SDP over
    $m \times m$ matrices is $O(\poly(m) \poly\log(R/\epsilon))$. For
    a more detailed discussion of SDP complexity, see e.g. \cite{gls:1993}.

These general techniques were applied to the separability testing
problem by Doherty, Parrilo and Spedalieri in \cite{dps:2003}.
We refer to resulting SDP as the DPS relaxation.  For a state
$\rho^{AB}$, the level-$k$ DPS relaxation asks whether there exists an
extension $\tilde\rho^{A_1 \ldots A_k B_1 \ldots B_k}$ invariant under
left or right-multiplying by any permutation of the $A$ or $B$
systems and that remains PSD under transposing any subset of the
systems.  This latter condition is called Positivity under Partial
Tranpose (PPT).  It is straightforward to see that searching for such
a $\tilde\rho$ can be achieved by an SDP of size $n^{O(k)}$.  In
\cite{NOP09} it was proven that the level-$k$ DPS relaxation produces
states within trace distance $O(n^2/k^2)$ of the set of separable
states.  Of course this bound is vacuous for $k < n$, but limited
results are known in this case as well; cf. the discussion in~
\ref{sec:previous-sep}.
 
Often weaker forms of DPS are analyzed.  For example, we might demand
only that an extension of the form $\tilde \rho^{AB_1\ldots B_k}$
exist, or might drop the PPT condition.  Many proof techniques
(e.g. those in \cite{brandao11} and followup papers) do not take
advantage of the PPT condition, for example, although it is known that
without it the power of the DPS relaxation will be limited (see
e.g. \cite{BuhrmanRSW11}).  Our approach will be to instead {\em add}
constraints to DPS.

\section{Results}\label{sec:results}
\subsection{Separability as polynomial optimization}

As discussed in Section~\ref{sec:related}, a number of problems in entanglement can be
reduced to the problem $h_{\ProdSym(n,d)}$: 
\begin{equation}
  \label{eq:hprodsym_density}
  \begin{aligned}
    &\max_{\rho \in \ProdSym(n,d)} && \Tr[M \rho].
  \end{aligned} 
\end{equation}
Since $\ProdSym(n,d)$ is a convex set, the maximum will be attained on
the boundary, which is the set of pure product states $\rho = (\ket{a}
\bra{a})^{ \otimes d}$. We can rephrase the optimization in terms of
the components of this pure product state.
\begin{equation}
\label{eq:hprodsym_components}
\begin{aligned}
  &\max_{a \in  \CC^n} && \sum_{i_1 \dots i_k j_1 \dots j_d} M_{(i_1
    \dots i_d), (j_1 \dots j_d)} a_{i_1}^* \dots a_{i_d}^*  a_{j_1} \dots a_{j_d}  \\
  &\text{subject to} && ||a||^2 =1.
\end{aligned}
\end{equation}
This is an optimization problem over the complex vector space $\CC^n$. We can
convert it to a real optimization problem over $\RR^{2n}$ by explicitly decomposing
the complex vectors into real and imaginary parts. Since the matrix
$M$ is hermitian, the objective function in
\eqref{eq:hprodsym_components} is a real polynomial in the real and imaginary parts
of $a$. Thus, we can write the problem as
\begin{equation}
  \label{eq:hprodsym_complex}
  \begin{aligned}
  &\max_{x \in \RR^{2n}} & & \sum_{i_1 \dots i_d j_1 \dots j_d} \tilde{M}_{(i_1 \dots i_d),( j_1 \dots j_d)}
  x_{i_1} \dots x_{i_d} x_{j_1} \dots x_{j_d} \\
  &\text{subject to} & & ||x||^2 -1 = 0 \\
\end{aligned}
\end{equation}
We will denote this problem by $h_{\ProdSym(\RR, 2n, d)}(\tilde{M})$. Here the matrix $\tilde{M}$ has dimension $(2n)^d \times
(2n)^d$. We can alternatively view $\tilde{M}$ as an object with $2d$
indices, each of which ranges from $1$ to $2n$. We call this a
\emph{tensor of rank $2d$}. Without loss of generality, we
can assume that $\tilde{M}$ is completely symmetric under all permutations of
the indices. Henceforth, we will only work with real variables, so we
will drop the tilde and just write $M$. For compactness' sake we will use the
notation $\langle M,x^{\ot 2d}\rangle$ to mean the contraction of $M$, viewed as a rank
$2d$ tensor, with $2d$ copies of the vector $x$. In this notation, the problem
$h_{\ProdSym(\RR, 2n, d)}(M)$ becomes:
\begin{equation}
  \label{eq:hprodsym}
  \begin{aligned}
  &\max_{x \in \RR^n} & & f_0(x) \equiv \< M,  x^{\ot 2d}\> \\
  &\text{subject to} & & f_1(x) \equiv ||x||^2 -1 = 0 \\
\end{aligned}
\end{equation}

Our first algorithm for this problem uses quantifier
elimination~\cite{Basu96} to solve \eqref{eq:hprodsym} in a black-box
fashion.  This yields an algorithm with runtime
$d^{O(n)}\poly\log(1/\eps)$.
\begin{theorem}\label{thm:quant-elim}
There exists an algorithm to estimate \eqref{eq:hprodsym} to
multiplicative accuracy $\eps$
in time $d^{O(n)}\poly\log(1/\eps)$.
\end{theorem}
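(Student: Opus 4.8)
The plan is to reduce the optimization \eqref{eq:hprodsym} to a sequence of decision problems in the first-order theory of the reals and to solve each of them with the quantifier-elimination algorithm of \cite{Basu96}. Concretely, for a rational threshold $t$ consider the sentence
\[
  \Phi(t):\qquad \exists\, x \in \RR^{2n}\;\bigl[\; \|x\|^2 - 1 = 0 \;\wedge\; f_0(x) - t \ge 0 \;\bigr].
\]
Since the feasible set $\{x : \|x\|^2 = 1\}$ is compact and $f_0$ is continuous, the maximum $h^* := h_{\ProdSym(\RR,2n,d)}(M)$ is attained, so $\Phi(t)$ holds if and only if $h^* \ge t$. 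The sentence $\Phi(t)$ has a single existential quantifier block over $O(n)$ variables and involves only two polynomials, each of degree at most $2d$; hence the quantifier-elimination/decision procedure runs in time $(2d)^{O(n)}$ times a polynomial in the bit sizes $L$ of the coefficients of $M$ and of $t$.

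First I would run the decision procedure on the strict variant ``$\exists x\,[\|x\|^2 = 1 \wedge f_0(x) > 0]$''; if this is false then $h^* = 0$ and we output $0$ exactly. Otherwise $h^* > 0$. Here one invokes the standard fact that the optimum $h^*$ is a real algebraic number whose minimal polynomial has degree $(2d)^{O(n)}$ and coefficients of bit size $L\cdot (2d)^{O(n)}$; by Cauchy/root-separation (gap) bounds, a nonzero such number satisfies $h^* \ge 2^{-\tau}$ with $\tau = L\cdot(2d)^{O(n)}$. Assuming as usual that $0 \le M \le I$ we also have $h^* \le 1$. I would then binary search for $h^*$ in the interval $[2^{-\tau},1]$ using the decision procedure for $\Phi(t)$, halting once the bracketing interval has length at most $\eps\cdot 2^{-\tau} \le \eps h^*$, and output its midpoint $\hat h$, so that $|\hat h - h^*| \le \eps h^*$.

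The running-time accounting is then routine: the number of binary-search iterations is $O(\tau + \log(1/\eps))$, and in each iteration the rational $t$ has $O(\tau + \log(1/\eps))$ bits, so each call to the decision procedure costs $(2d)^{O(n)}\poly(L)\,\poly(\tau + \log(1/\eps)) = d^{O(n)}\poly\log(1/\eps)$ once the $\poly(L)$ and $(2d)^{O(n)}$ factors are absorbed (the input size being treated as polynomially bounded). Multiplying by the iteration count leaves the claimed bound $d^{O(n)}\poly\log(1/\eps)$. As an alternative to binary search, the algorithm of \cite{Basu96} can be arranged to return $h^*$ directly as a real univariate representation in time $(2d)^{O(n)}\poly(L)$, which one then rounds to relative accuracy $\eps$ in an additional $\poly\log(1/\eps)$ steps; I would mention this but lead with the binary-search version since it is the most elementary.

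The one genuinely non-routine point is the interaction of \emph{multiplicative} accuracy with \emph{poly-logarithmic} dependence on $\eps$: a priori $h^*$ could be so small that merely writing down an $\eps$-relative approximation would seem to require too many bits, and then each decision call would be too expensive. This is exactly what the algebraic gap bound rules out --- $h^*$ is either $0$ or at least $2^{-d^{O(n)}}$ --- and it is what forces the binary search to sweep a range of $d^{O(n)}$ bits rather than only $\log(1/\eps)$ bits, a cost already absorbed into the $d^{O(n)}$ prefactor. The remaining ingredients --- confirming the degree ($2d$) and variable count ($2n$) of the instance, and quoting the quantifier-elimination complexity estimate and the height/degree bound for $h^*$ --- are standard and I would cite them rather than reprove them.
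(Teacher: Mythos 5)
Your proposal is correct, and it shares the paper's overall skeleton (reduce to existential decision problems via \cite{Basu96}, then binary search on a threshold $t$), but it handles the crucial multiplicative-to-additive conversion in a genuinely different way. The paper lower bounds $h^{*}$ by a short self-contained argument: assume WLOG that $M$ is supported on the symmetric subspace with $\|M\|=1$; then averaging over Haar-random pure states gives
\[
h_{\ProdSym(n,d)}(M)\;\geq\; \E_{\ket a}\Tr\!\left[M\,\proj a^{\otimes d}\right] \;=\; \frac{\Tr[M]}{\binom{n+d-1}{d}}\;\geq\; n^{-d},
\]
so it suffices to reach additive error $\eps' := \eps\, n^{-d}$, and the binary search sweeps only $O(d\log n + \log(1/\eps))$ bits. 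You instead invoke a root-separation / algebraic-height bound to argue $h^{*}\in\{0\}\cup[2^{-\tau},1]$ with $\tau = L\,(2d)^{O(n)}$, and sweep $O(\tau+\log(1/\eps))$ bits. Both routes land in $d^{O(n)}\poly\log(1/\eps)$, because your larger iteration count is absorbed into the $d^{O(n)}$ prefactor; your version has the virtue of being generic to any polynomial optimization with algebraic data, whereas the paper's averaging argument is shorter, does not need to cite algebraic-number height/separation estimates, and yields an exponentially better lower bound on $h^{*}$ (namely $n^{-d}$ rather than $2^{-L d^{O(n)}}$), which is why the paper's binary search has only polynomially many rounds. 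One detail to track in your version: the paper controls the coefficient bit-size $L$ by explicitly truncating $M$ to precision $\eps'/\poly(n)$ before invoking \cite{Basu96}; you should say the same, since your $\tau$ depends on $L$ and otherwise the "input size polynomially bounded'' assumption is doing silent work.
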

Estimating a number $X$ to multiplicative accuracy $\eps$ means
producing an estimate $\hat X$ satisfying $|X - \hat X| \leq \eps
|X|$, while additive accuracy $\eps$ means that $|X-\hat X| \leq \eps$.

\begin{proof}
Assume WLOG that $M$ is supported on the symmetric subspace and 
has been rescaled such that $\|M\| = 1$.   Then
\be
h_{\ProdSym(n,d)}(M) \geq
\E_{\ket a} \Tr[M \proj{a}^{\ot d}]
= \frac{\Tr[M]}{\binom{n+d-1}{d}}
\geq \|M\| n^{-d}.\ee
Thus it will suffice to achieve additive error $\eps' := \eps/n^d$.

Theorem 1.3.3 of \cite{Basu96} states that polynomial equations of the
form 
\be \exists x\in \RR^n, g_1(x)\geq 0, \ldots, g_m(x)\geq 0
\label{eq:existential}\ee
 can be
solved using $(md)^{O(n)}$ arithmetic operations.   Moreover if the
$g_1,\ldots,g_m$ have integer coefficients with absolute value $\leq
L$ then the intermediate numbers during this calculation are integers
with absolute value $\leq L (md)^{O(n)}$.
We can put \eqref{eq:hprodsym} into the form \eqref{eq:existential} (with $m = O(1)$) by adding a
constraint of the form $f_0(x) \geq \theta$ and then performing binary
search on $\theta$, starting with the {\em a priori} bounds $0 \leq
h_{\ProdSym}(M) \leq \|M\| \leq 1 $.   If we specify the entries of $M$ to
precision $\eps'/ \poly(n)$ then this will induce operator-norm error
$\leq \eps'$, which implies error $\leq \eps'$ in $h_{\ProdSym}$.  Thus
we can take $L \leq \poly(n)/\eps' \leq n^{d+O(1)}/\eps$.  Since
arithmetic operations on numbers $\leq L$ require $\poly\log(L)$ time,
we attain the stated run-time.
\end{proof}

The advantage of this argument is that it is simple and yields an
effective algorithm.  However, SDP hierarchies have several advantages
over \thmref{quant-elim}.  The dual of an SDP can be useful, and here
corresponds to entanglement witnesses, as we discuss in
\ref{sec:witness}.  An SDP hierarchy can interpolate in runtime between
polynomial and exponential, whereas the algorithm in
\thmref{quant-elim} can only be run in exponential time.  Finally the
hierarchy we develop can be interpreted in terms of extensions of
quantum states and therefore has an interpretation in terms of a
monogamy relation, although developing this is something we leave for
future work.

We now turn towards developing an improved SDP hierarchy for approximating
$h_{\ProdSym}$ in a way that will be at least as good at DPS at the
low end and will match the performance of \thmref{quant-elim} at the
high end.
The objective function and constraints in \eqref{eq:hprodsym} are both smooth, so the
maximizing point must satisfy the \emph{Karush-Kuhn-Tucker
  (KKT) conditions}:
\[ \rank \begin{pmatrix} \frac{\pd f_0(x)}{\pd x_1} & \frac{\pd
    f_1(x)}{\pd x_1} \\ \vdots & \vdots \\ \frac{\pd f_0(x)}{\pd x_{2n}}
  & \frac{\pd f_1(x)}{\pd x_{2n}} \end{pmatrix} < 2. \]
This rank condition is equivalent to the condition that all $2\times
2$ minors of the matrix should be equal to zero. Each minor is a
polynomial of the form
\be g_{ij}(x) = \frac{\pd f_0(x)}{\pd x_i} \frac{\pd f_1(x)}{\pd x_j} -
\frac{\pd f_0(x)}{\pd x_j} \frac{\pd f_1(x)}{\pd x_i}.
\label{eq:gij-def} \ee
Note that $\deg(g_{ij}(x)) = \deg(\<M, x^{\ot 2d}\>) = 2d$. If we add these conditions to \eqref{eq:hprodsym}, we get the
following equivalent optimization problem:
\begin{equation}
  \label{eq:hsep_kkt}
  \begin{aligned}
    &\max_{x \in \RR^{2n}} & & f_0(x) \\
    &\text{subject to} & & f_1(x) =0\\
    &&& g_{ij}(x) = 0 &  \forall \, 1 \leq i, j \leq 2n
  \end{aligned}
\end{equation}

\subsection{Constructing the Relaxations}
We will now construct SDP relaxations for this problem.
Our first step will be to express \eqref{eq:hsep_kkt} in terms of polynomial
positivity:
\[
\begin{aligned}
  &\min && \nu \\
  &\text{such that} &&  \nu 
\<\ident^{\otimes d}, x^{\ot 2d}\>- f_0(x) \geq 0\\
  &\text{whenever} && f_1(x)
  = 0 \\
  &&& g_{ij}(x) = 0 &
  \forall \, 1 \leq i, j \leq 2n
\end{aligned}
\]
Here $\ident$ is the identity matrix.
Note that we have multiplied $\nu$ by $\<\ident^{\otimes d}, x^{\ot 2d}\> =
||x||^{2d} $; we are free to do this because
this factor is equal to $1$ whenever the norm constraint is satisfied. 
Now, as we described in Section~\ref{sec:sos}, we replace the positivity constraint with
the existence of an SOS certificate.
\begin{equation}
\label{eq:hsep_kkt_sos}
\begin{aligned}
  &\min && \nu \\
  &\text{such that} && \<\nu \ident^{\otimes d} -M, x^{\ot 2d}\>=
  \sigma(x) + \phi(x) f_1(x) + \sum_{ij} \chi_{ij}(x) g_{ij}(x)\\
\end{aligned}
\end{equation}
Here $\sigma$ is a sum of squares and $\phi, \chi_{ij}$ are arbitrary
polynomials. We can now produce a hierarchy of relaxations by
varying the degree of the certificates $(\sigma, \phi, \chi_{ij})$ that we
search over. Specifically, at the $r$th level of the hierarchy, the
total degree of all terms in the SOS certificate is upper-bounded by
$2(r + d)$.

\subsubsection{Explicit SDPs}
The formulation \eqref{eq:hsep_kkt_sos}  of the hierarchy in terms of
SOS polynomials will be the one we use for most of our analysis. However, there is an
alternative formulation in terms of an explicit SDP over moment matrices, which is more
convenient for some purposes. Before we derive it, we will first make
some simplifications that will let us eliminate the polynomial
$\phi(x)$. Suppose we are working at level $r$ of the hierarchy, so
all the terms in the certificate have degree at most $2(d+r)$. Without loss of generality, we can
assume that all terms in $\phi(x)$ and $\chi_{ij}(x)$ have even
degree~\footnote{The LHS of \eqref{eq:hsep_kkt_sos} only contains even
  degree terms, as do $f_1(x)$ and $g_{ij}(x)$. Thus, any odd degree
  terms in $\phi(x)$ and $\chi_{ij}(x)$ must cancel each other, so
  they can all be removed.}. Moreover, we claim that without loss of generality, all the
polynomials $\chi_{ij}$ are homogeneous of degree $2r$. Indeed,
suppose $\chi_{ij}$ contains a term $a$ of degree $2(r - k)$. Then $a
= ||x||^{2k} a + (1 - ||x||^{2k})a$. Since $f_1(x) = ||x||^2 - 1$
divides $||x||^{2k} - 1$ for all $k \geq 1$, this means we can replace
$a$ with $||x||^{2k} a$ and absorb the error term inside
$\phi(x)$.

Now we can eliminate $\phi(x)$ using the following
argument, which is based on Proposition 2
in \cite{deklerk:2005}. Denote the LHS of
\eqref{eq:hsep_kkt_sos} by $q(x)$ and observe that it is homogeneous of degree
$2d$. 
Then since $f_1(x/\|x\|)=0$ we have
\begin{align*}
  q\left( \frac{x}{||x||}\right) &= \sigma\left(\frac{x}{||x||}\right)
  + \sum_{ij} \chi_{ij}\left(\frac{x}{||x||}\right)
  g_{ij}\left(\frac{x}{||x||}\right) \\
  q(x) ||x||^{2r} &= \sigma\left(\frac{x}{||x||}\right) ||x||^{2(r+d)} +
  \sum_{ij} \chi_{ij}(x) g_{ij}(x)
\end{align*}
Since $\sigma$ has degree at most $2(r+d)$, 
$\sigma'(x) \equiv \sigma\left(\frac{x}{||x||}\right) ||x||^{2(r+d)}$ is a polynomial
in $x$. Moreover, by expanding the $\sigma(x) = \sum_k a_k(x)^2$, one
can check that $\sigma'(x) = \sum_a s_{a}^{2}(x)$ where each term $s_a$ is homogeneous
of degree $r + d$. We say that $\sigma'(x)$ is a sum of homogeneous
squares. Thus, from a certificate of the form given in \eqref{eq:hsep_kkt_sos}, we have constructed a new
certificate of the form
\begin{equation}
  q(x)||x||^{2r}  = \sigma'(x) + \sum_{ij} \chi_{ij}(x) g_{ij}(x),
  \label{eq:polya_cert}
\end{equation}
with $\sigma'$ a sum of homogeneous squares. In this form we have
eliminated the polynomial $\phi$. Conversely, from any certificate of the form
\eqref{eq:polya_cert}, we can produce a certificate in the form \eqref{eq:hsep_kkt_sos}
as follows:
\begin{align*}
  q(x)||x||^{2r}  &= \sigma'(x) + \sum_{ij} \chi_{ij}(x) g_{ij}(x)
  \\
  q(x) &= \sigma'(x) + q(x)(1 - ||x||^{2r}) + \sum_{ij} \chi_{ij}(x)
  g_{ij}(x)
\end{align*}
Since $1 - ||x||^2$ divides $1 - ||x||^{2r}$, this is indeed a
certificate of the form given in \eqref{eq:hsep_kkt_sos}.
Thus, we have shown that the hierarchy \eqref{eq:hsep_kkt_sos} is
equivalent to the following hierarchy.
\begin{equation}
  \label{eq:hsep_kkt_polya}
  \begin{aligned} 
  &\min && \nu \\
  &\text{such that} && 
\< \nu \ident^{\otimes(d+r)} -M \otimes
  \ident^{\otimes r}, x^{\ot 2(d+r)}\>- \sum_{ij}
  \chi_{ij}(x) g_{ij}(x) = \sigma(x) . \\
\end{aligned}
\end{equation}
Here, $\chi_{ij}(x)$ is an arbitrary homogeneous polynomial of degree
$2r$ and $\sigma(x)$ is a sum-of-homogeneous-squares polynomial of
degree $2(d+r)$. 

This SOS program can be written explicitly as an SDP, using the
procedure described in Section~\ref{sec:sos}. This would produce an
SDP over $m \times m$ matrices where $m = \binom{2n + 2(d+r) -
  1}{2(d+r)}$ is the number of monomials of degree
$2(d+r)$. This SDP can be solved to accuracy $\epsilon$ in time
$O(\poly(m) \poly\log(1/\epsilon))$. 

However, in order to facilitate comparison with DPS, we will
instead write an SDP over $(2n)^{d+r} \times (2n)^{d+r}$
matrices; this corresponds to treating different orderings of the
variables in a monomial as distinct monomials. The redundant degrees
of freedom will be removed by imposing symmetry
constraints. 
Specifically, let the map $\P$ from tensors of rank
$2k$ to matrices of dimension $(2n)^{k}$ be defined by
\[ (\P A)_{(i_1 i_2 \dots i_k), (i_{k+1} i_{k+2} \dots i_{2k})} \equiv \frac{1}{(2k)!} \sum_{\pi \in
  \mathcal{S}_{2k}} A_{i_{\pi(1)} i_{\pi(2)} \dots
  i_{\pi(2k)} } , \]
where $\mathcal{S}_{2k}$ is the group of all permutations of $\{1, \dots, 2k\}$.
Then our SDP is
\begin{equation}
\label{eq:hsep_kkt_polya_sdp}
\begin{aligned}
  &\min && \nu \\
  &\text{such that} && \P \left( \nu \ident^{d+r} - M \otimes
    \ident^{\otimes r} - \sum_{ij\alpha} \chi_{ij\alpha} A_\alpha \otimes \Gamma_{ij}
  \right) \succeq 0.
\end{aligned}
\end{equation}
Here, the indices $ij$ label the KKT constraints, and
the multi-index $\alpha$ labels all monomials of degree $2r$. The
variable $\chi_{ij\alpha}$ is the coefficient of the monomial
$\alpha$ in the polynomial $\chi_{ij}$. The matrix $A_\alpha$
represents the monomial $\alpha$, i.e. $\<A_\alpha, x^{\ot 2r}\> =
x_1^{\alpha_1} \dots x_n^{\alpha_n}$. Finally, the matrix
$\Gamma_{ij}$ represents the KKT polynomial $g_{ij}(x)$,
i.e. $\<\Gamma_{ij}, x^{\ot 2d} \>= g_{ij}(x)$.

Now we can at last write down the moment matrix version of the
hierarchy by applying SDP duality to \eqref{eq:hsep_kkt_polya_sdp}.
\begin{equation}
  \label{eq:hsep_kkt_moment}
  \begin{aligned}
    &\max_{\rho} && \langle \P(M \otimes \ident^{\otimes r}), \rho \rangle \\
    &\text{such that} && \rho \succeq 0 \\
    &&& \langle \P(A_\alpha \otimes \Gamma_{ij} ), \rho \rangle = 0
    \; \forall i, j, \alpha.
  \end{aligned}
\end{equation}
In this program, the variable $\rho$ is a matrix in $\RR^{(2n)^{d+r} \times (2n)^{d+r}}$. Now we see the advantage of adding the redundant degrees of
freedom in the SDP---just as in DPS, $\rho$ can be interpreted as the density matrix
over an extended quantum system. The main difference from DPS
is the set of added constraints $\langle A_\alpha \otimes \Gamma_{ij}, \rho
\rangle = 0$, which are the moment relaxations of the KKT
conditions.

The SDP~\eqref{eq:hsep_kkt_moment} is over $(2n)^{d+r} \times
(2n)^{d+r}$ matrices, so if $r = O(\exp(n))$, we would na\"{i}vely
expect it to have time complexity $O(\exp(\exp(n) \log(n)))$. This
apparently large complexity is caused by the redundant degrees of
freedom we added above. In practice, we can use the symmetry constraints
enforced by $\P$ to eliminate the redundancy and bring the complexity back down to
$\binom{2n + 2(d+r) - 1}{2(d+r)}^{O(1)}$, which is $O(\exp(n))$ when
$r = O(\exp(n))$. This is discussed in more detail in Section IV of
the original DPS paper~\cite{dps:2003}.

\subsection{Degree bounds for SOS certificates}
In this section, we will show that for generic inputs, the SOS form of
the hierarchy
\eqref{eq:hsep_kkt_sos} converges exactly within $d^{O(n^2)}$
levels. In other words, we will show that generically, there exists a
sum-of-squares certificate of degree $O(d^{\poly(n)})$.  This is an algebraic statement, so it is useful to recast it in the
language of polynomial ideals. We define the \emph{KKT ideal} $I_K$ to
be the ideal generated by the polynomials $g_{ij}$ and $f_1$. Likewise,
define the \emph{truncated} KKT ideal $I_K^m$ to be
\[I_K^m = \left\{ v(x) f_1(x) + \sum_{ij} h_{ij}(x) g_{ij}(x) :
  \deg(v(x) f_1(x)) \leq m, \max_{i, j} \deg(h_{ij}(x) g_{ij}(x)) \leq m \right\}. \]

Then we claim
\begin{theorem}\label{thm:main}
  Let $f_0, f_1, g_{ij}$ be as defined in \eqref{eq:hprodsym}. Then there
  exists $m = d^{O(n^2)}$ such that for generic $M$,
  if $\nu - f_0(x) > 0$ for all $x \in \mathbb{R}^{2n}$ such that
  $f_1(x) = 0$, then 
  \[ \nu - f_0(x) = \sigma(x) + g(x), \]
  where $\sigma(x)$ is sum of squares, $\deg(\sigma(x)) \leq m$ and
  $g(x) \in I_K^m$.
  \label{thm:sos_cert}
\end{theorem}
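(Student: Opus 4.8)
The plan is to reduce the statement to a combination of three ingredients: (1) a bound, using B\'ezout's and Bertini's theorems, on the size and geometry of the variety $V(I_K)$ cut out by the KKT equations; (2) a degree bound for a Positivstellensatz certificate that is effective in terms of this geometric data, obtained via Putinar (Theorem~\ref{thm:psatz}); and (3) a Gr\"obner basis / linear-algebra argument showing that the certificate can be taken inside the \emph{truncated} ideal $I_K^m$ with $m = d^{O(n^2)}$. First I would establish that for generic $M$, the variety $V(I_K) \cap \CC^{2n}$ is zero-dimensional: the KKT system consists of the sphere equation $f_1$ together with the $2\times 2$ minors $g_{ij}$, which force the gradient $\nabla f_0(x)$ to be parallel to $\nabla f_1(x) = 2x$; that is, $x$ is an eigenvector-like critical point of the degree-$2d$ form $\langle M, x^{\ot 2d}\rangle$ on the sphere. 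By Bertini's theorem (genericity of $M$) this system has no solutions at infinity and only isolated, reduced solutions, and B\'ezout bounds their number by a product of the degrees, which is $\poly(d)^{O(n)} = d^{O(n)}$. (The $n^2$ in the final exponent, rather than $n$, will come from the Gr\"obner-basis step, where worst-case doubly-exponential behavior is avoided precisely because the variety is finite — one gets degree bounds polynomial in the B\'ezout number raised to a power linear in $n$.)

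Second, since $V(I_K)$ is finite, it is in particular compact, so (after checking the Archimedean condition holds for $I_K$, which follows from the sphere constraint $\|x\|^2 = 1$ already lying in the ideal) Putinar's Positivstellensatz applies: $\nu - f_0 > 0$ on $V(I_K)$ implies $\nu - f_0 = \sigma + g$ with $\sigma$ SOS and $g \in I_K$. The content of the theorem beyond Putinar is the \emph{degree} bound, and I would get it from an effective Positivstellensatz: when the real variety is finite with a quantitative bound on its complexity (number of points, their height, the degree of a separating/Lagrange-interpolation basis), one can bound the degrees of $\sigma$ and of the ideal multipliers by a function of that complexity. Concretely, using a Gr\"obner basis of $I_K$ (whose elements have degree at most $d^{O(n^2)}$ by Dub\'e-type bounds specialized to the zero-dimensional case), every polynomial vanishing on $V(I_K)$ reduces to $0$ modulo the basis using cofactors of controlled degree, so one can produce $g \in I_K^m$ with $m = d^{O(n^2)}$; the SOS part $\sigma = \nu - f_0 - g$ then automatically has degree $\le m$ as well, and one checks it is SOS (not merely nonnegative) by the fact that on the finite reduced variety a strictly positive function has an interpolating SOS representative, which can be spread to all of $\RR^{2n}$ modulo the ideal without increasing degree beyond $m$.

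The main obstacle I anticipate is the interface between the \emph{real} analytic content (Putinar needs strict positivity on the real variety, and the SOS conclusion is a real statement) and the \emph{complex} algebraic tools (B\'ezout and Bertini naturally bound $V(I_K) \cap \CC^{2n}$, and Gr\"obner bases live over $\CC$ or $\RR$ but the degree bounds are cleanest over an algebraically closed field). Making the degree bound for the SOS certificate genuinely effective — rather than just invoking Putinar's non-constructive existence — requires either a quantitative Positivstellensatz for finite real varieties or an explicit interpolation-plus-ideal-membership construction; getting the exponent to land at $n^2$ and not blow up to the doubly-exponential Dub\'e bound is where the zero-dimensionality (hence genericity, via Bertini) must be used carefully. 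A secondary subtlety is verifying the Archimedean condition and the ``no solutions at infinity'' claim for the specific minor polynomials $g_{ij}$ of \eqref{eq:gij-def}, which may require a short homogeneity argument about the structure of $\nabla f_0$ as a degree-$(2d-1)$ map. I would isolate these as lemmas (one geometric: ``generically $V(I_K)$ is finite with $|V(I_K)| \le d^{O(n)}$''; one algebraic: ``membership in a zero-dimensional ideal with a degree-$D$ Gr\"obner basis has cofactors of degree $\le \poly(D,n)$''; one on the effective Putinar degree) and assemble the theorem from them.
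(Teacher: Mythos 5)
Your high-level outline matches the paper's: (1) show $V(I_K)$ is zero-dimensional for generic $M$ via Bertini and B\'ezout, (2) invoke Putinar's Positivstellensatz (Theorem~\ref{thm:psatz}), whose Archimedean condition holds because $1-\|x\|^2 \in I_K$, and (3) use Gr\"obner-basis degree bounds to get the truncation. You also correctly flag the real-vs-complex interface and the structure at infinity as the delicate points. But there is a genuine gap in your step (3): you propose to first choose $g \in I_K^m$ and then verify that $\sigma := \nu - f_0 - g$ is SOS via ``interpolating SOS representatives'' on the finite real variety. This is circular --- nothing in the Gr\"obner machinery tells you \emph{which} $g \in I_K^m$ to pick so that the remainder happens to be a sum of squares --- and the interpolation idea runs into exactly the real/complex obstruction you flagged: $V(I_K) \cap \CC^{2n}$ will generically have nonreal points, and a polynomial nonnegative (or even strictly positive) on the real points of the variety need not be SOS modulo $I_K$.

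The paper sidesteps both issues by reasoning in the opposite direction, following Laurent's Theorem~6.15: first invoke Putinar with no degree control to get \emph{some} certificate $\nu - f_0 = \sum_a s_a(x)^2 + g(x)$ with $g \in I_K$; then reduce the degree of each square $s_a$ by Gr\"obner division, writing $s_a = (\text{ideal part}) + u_a$ with $\deg u_a \le nD$ (Proposition~\ref{prop:groebner_zerod_division}, valid because $I_K$ is zero-dimensional). Squaring and regrouping gives a new certificate $\nu - f_0 = \sum_a u_a(x)^2 + g''(x)$ whose SOS part has degree $\le 2nD = d^{O(n^2)}$, and since the left-hand side has degree $2d$, the ideal part $g''$ is forced to have the same bounded degree. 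One then applies Proposition~\ref{prop:groebner_division} and Lemma~\ref{lem:kkt_groebner} to write $g''$ in terms of the original generators $f_1, g_{ij}$ with bounded cofactor degrees, yielding $g'' \in I_K^m$. The key move you are missing is this degree-reduction of the individual squares $s_a$ in an already-existing Putinar certificate --- that is what converts the non-effective Positivstellensatz into an effective one without ever needing the real variety to be Zariski dense in the complex one.
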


The proof is in Section~\ref{sec:proofs}.

\begin{corollary}
We can estimate $h_{\ProdSym(n,2)}$ to multiplicative error $\eps$ in time $\exp(\poly(n))\poly\log(1/\eps)$.
\end{corollary}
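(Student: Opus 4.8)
The plan is to reduce the degree-bounded SOS question to a statement about the geometry of the variety cut out by the KKT ideal, and then to run the standard machinery of SOS/Gröbner bases over a finite set. First I would argue that for generic $M$ the KKT system has only finitely many real (indeed complex) solutions, with a degree bound on the number of points. The KKT conditions say that $\nabla f_0(x)$ is parallel to $\nabla f_1(x) = 2x$, i.e. $\nabla f_0(x) = \lambda x$, together with $\|x\|^2 = 1$. Since $f_0(x) = \langle M, x^{\otimes 2d}\rangle$ is homogeneous of degree $2d$, this is the eigenproblem for a symmetric tensor; for generic $M$ the projective variety $\{\nabla f_0(x) \parallel x\}$ is zero-dimensional, and by B\'ezout its degree is at most a product of the degrees of the defining $2\times 2$ minors, each of degree $2d$ in $2n$ variables, giving a bound of the form $(2d)^{O(n)}$ on the number of points. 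Bertini's theorem is the tool that lets me assert that this genericity — smoothness and finiteness of the intersection — holds on a Zariski-open dense set of $M$. The variety $V(I_K) \cap \mathbb{R}^{2n}$ is therefore a finite set of real points, each isolated.

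Next I would invoke an effective Nullstellensatz / real radical argument on this finite set. Once $V(I_K)$ is finite, the quotient ring $\mathbb{R}[x]/I_K$ (or $\mathbb{R}[x]/\sqrt[\mathbb{R}]{I_K}$) is finite-dimensional, and for any polynomial $p$ that is positive on the real points of $V(I_K)$ one can write $p = \sigma + g$ with $\sigma$ SOS and $g \in I_K$; the point is that on a finite real variety every nonnegative function is (after adding an element of the ideal) a square, and positivity gives strict positivity with room to spare. What requires care is the \emph{degree} of this representation: this is where an effective version of Putinar (Theorem~\ref{thm:psatz}) applied to the zero-dimensional case, or a direct Gr\"obner-basis computation, is needed. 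Using a Gr\"obner basis for $I_K$ one bounds the degrees of the normal-form reductions, and hence the degrees of $\sigma$ and of the ideal-multipliers, by a function of the number of variables ($2n$) and the degrees of the generators ($2d$ for the $g_{ij}$, $2$ for $f_1$); standard doubly-exponential-in-general but singly-exponential-in-this-structured-case bounds on Gr\"obner bases give the claimed $m = d^{O(n^2)}$. The $n^2$ in the exponent comes from the number of KKT minors being $O(n^2)$ and from the degree bounds on ideal membership scaling with the number of generators.

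For the corollary: by the reduction in Section~\ref{sec:related}, $h_{\Sep(n,2)}$ reduces to $h_{\ProdSym(2n,2)}$ with only a constant blowup in dimension, so it suffices to solve $h_{\ProdSym}$. Theorem~\ref{thm:main} says the SOS hierarchy \eqref{eq:hsep_kkt_sos} converges at level $r = d^{O(n^2)}$ for generic $M$. The SDP at that level, by the discussion following \eqref{eq:hsep_kkt_polya_sdp}, has size $m = \binom{2n + 2(d+r) - 1}{2(d+r)}$, which is $\exp(\poly(n))$ when $d = O(1)$ and $r = \exp(\poly(n))$, and can be solved to accuracy $\epsilon$ in time $\poly(m)\poly\log(1/\epsilon) = \exp(\poly(n))\poly\log(1/\epsilon)$. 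To handle non-generic $M$ — since our algorithm must work for all inputs — I would either perturb $M$ by a random amount $\delta \leq \epsilon/\poly(n)$ (genericity holds with probability $1$, and the perturbation shifts $h_{\ProdSym}$ by at most $\epsilon$), or observe that the hierarchy at level $r$ is always a valid lower-bounding relaxation, so running the $\exp(\poly(n))$-level SDP never overshoots and by a continuity/limiting argument is within $\epsilon$ of the true value. Finally, multiplicative accuracy follows from the \emph{a priori} lower bound $h_{\ProdSym}(M) \geq \|M\|\, n^{-d}$ established in the proof of Theorem~\ref{thm:quant-elim}, which lets us convert additive error $\epsilon/n^d$ into multiplicative error $\epsilon$ at the cost of only a $\poly\log$ factor.

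The main obstacle is the \emph{quantitative} step: turning ``$V(I_K)$ is finite'' into an explicit degree bound $d^{O(n^2)}$ for the SOS certificate. Finiteness via B\'ezout/Bertini is clean, and the existence of \emph{some} SOS certificate on a finite real variety is standard, but controlling its degree requires a careful effective-algebra argument (bounding Gr\"obner basis degrees and Positivstellensatz degrees in the zero-dimensional case), and making sure the genericity assumptions used there are exactly the Zariski-open conditions guaranteed by Bertini. Propagating genericity consistently through both the geometric and the computational-algebra parts of the argument is where the real work lies.
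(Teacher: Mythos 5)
Your derivation of the corollary is essentially the same as the paper's: invoke Theorem~\ref{thm:main} to get a degree-$d^{O(n^2)}$ SOS certificate for generic $M$, observe the corresponding SDP has size $\binom{2n+2(d+r)-1}{2(d+r)} = \exp(\poly(n))$ when $d=O(1)$ and $r = d^{O(n^2)}$, and solve it in $\poly(\text{size})\poly\log(1/\eps)$ time. Your additional observation that multiplicative accuracy follows from the lower bound $h_{\ProdSym}(M) \geq \|M\| n^{-d}$ (borrowed from the proof of Theorem~\ref{thm:quant-elim}) is correct and actually fills in a small step that the paper's one-line proof glosses over.

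One error in your fallback argument for non-generic $M$: the KKT-augmented SOS hierarchy is an \emph{upper}-bounding relaxation for the maximization \eqref{eq:hprodsym}, not a lower-bounding one. The SOS program \eqref{eq:hsep_kkt_sos} minimizes $\nu$ subject to $\nu - f_0$ having a nonnegativity certificate on the constraint variety, so any feasible $\nu$ satisfies $\nu \geq h_{\ProdSym}(M)$; likewise the moment form \eqref{eq:hsep_kkt_moment} is a maximization over a feasible set containing the true optimal moment matrix. So ``never overshoots'' is exactly backwards — the worry for non-generic $M$ is precisely that the relaxation overshoots. Your perturbation idea is a viable fix, but the paper handles this differently (in the subsection following the corollary): it shows the moment SDP satisfies Slater's condition for all $M$, hence the SDP optimal value is a continuous function of $M$, and since it agrees with the continuous function $h_{\ProdSym}(M)$ on the dense open set of generic $M$, they agree everywhere. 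That argument gives a clean deterministic algorithm, whereas perturbation introduces randomness and requires you to quantify how much $h_{\ProdSym}$ can move under an $O(\delta)$-perturbation of $M$ (which is easy, but your approach would also need to argue that a random perturbation lands in the generic set with probability~$1$ in a way that is robust to finite precision).
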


This follows from \thmref{main} and the fact that the value of semidefinite
programs can be computed in time polynomial in the dimension, number
of constraints and bits of precision (i.e. $\log 1/\eps$).

\subsection{Entanglement detection}\label{sec:witness}
So far, we have restricted ourselves to optimization problems over the
convex sets $\Sep$ and $\ProdSym$. In practice, another very important
problem is entanglement detection, i.e. testing whether a given
density matrix is a member of $\Sep$ or $\ProdSym$. In general,
membership testing and optimization for convex sets are intimately
related. There exist polynomial time reductions in both directions
using the ellipsoid method, as described in Chapter 4 of
\cite{gls:1993}. Thus, our results immediately imply an algorithm of complexity
$O(d^{\poly(n)} \poly\log(1/\epsilon))$ for membership testing in
$\Sep$. 

There is, however, a more direct way to go from optimization to
membership, using the notion of an \emph{entanglement witness}. The
idea is that to show that a given state $\rho$ is not in $\Sep$
(resp. $\ProdSym$), it suffices to find a Hermitian operator $Z$ such that
$\Tr[Z \rho] < 0$, but for all $\rho' \in \Sep$ (resp. $\ProdSym$),
$\Tr[Z \rho'] \geq 0$. Such an operator $Z$ is called an entanglement
witness for $\rho$. The search for an entanglement witness can be
phrased as an optimization problem:
\begin{equation}
  \label{eq:witness_opt}
  \begin{aligned}
    &\min_Z && \Tr[Z \rho] \\
    &\text{such that} && \Tr[Z \rho'] \geq 0 \; \forall \rho' \in \ProdSym
  \end{aligned}
\end{equation}
If the optimum value is less than $0$, then we know that $\rho$ is
entangled. Geometrically, an entanglement witness is a
separating hyperplane between $\rho$ and the convex set of separable
states. Thus, because of the hyperplane separation theorem for convex sets, every entangled $\rho$ must have some
witness that detects it. However, finding the witness may be very
difficult.

The witness optimization problem (\ref{eq:witness_opt}) is closely
related to the problem $h_{\ProdSym}$. In particular, suppose that for a
measurement operator $M$, we know that $h_{\ProdSym}(M) < \nu$. Then
$Z = \nu \ident - M$ is a feasible point
for (\ref{eq:witness_opt}). As a consequence of this, any feasible
solution to the SOS form of either DPS or our hierarchy will yield an
entanglement witness operator.

In the case of DPS, it turns out that this connection also yields an
efficient way to search for a witness detecting a given entangled
state. To see this, we consider the set of all possible witnesses
generated by DPS at level $r$, for any measurement operator $M$. Through straightforward computations (see Section VI of
\cite{dps:2003}), one finds that this set is
\be \label{eq:dps_witness} \mathrm{EW}_{\text{DPS}}(r) 
= \{ \Lambda^\dagger(Z_0 + Z_1 + \dots + Z_r)
: Z_0 \succeq 0, Z_1^{T_1} \succeq 0 \dots Z_r^{T_m} \succeq 0\}. \ee
Here $\Lambda$ is a certain fixed linear map and the superscripts $T_1, \dots T_m$
indicate various partial transposes (i.e. permutations interchanging a
subset of the row and column indices). The important thing to note is
that this is a convex set; in fact, it has the form of the feasible
set of a semidefinite program. Thus, given a state, it is possible to
efficiently search for an entanglement witness detecting it using a
semidefinite program.

Once we add the KKT conditions, the situation is not as
convenient. The set of all entanglement witnesses at level $r$,
denoted $\mathrm{EW}_{KKT}(r)$, is the set of $Z$ for which  $\exists
\sigma(x), \chi_{ij}(x)$ such that 
$$ \begin{aligned} &\<Z, x^{\ot 2d}\>
= \sigma(x) + \sum_{ij} \chi_{ij}(x) g_{ij}(x) \\ &\deg(\sigma(x)) \leq
r \\ &\deg(\chi_{ij}(x) g_{ij}(x)) \leq r \end{aligned}
$$
The important difference from DPS is that the polynomials $g_{ij}(x)$
come from the KKT conditions and thus \emph{depend on $Z$}. This in
particular means that $\mathrm{EW}_{KKT}(r)$ no longer has the form of
an SDP feasible set, nor indeed is it necessarily convex. However, we also note that by
Theorem~\ref{thm:sos_cert}, an open dense subset of all entanglement witnesses is contained in
$\mathrm{EW}_{KKT}(r)$ for $r = n^{O(d^2)}$.

\subsection{Numerical results}
\label{sec:numerics}
While our theoretical results show that adding the KKT conditions
results an improvement at very high levels of the DPS hierarchy,
we have also found numerical evidence of improvements even for
low-dimensional systems at very low levels of the
hierarchy. We compared the performance of the hierarchy with and
without the KKT conditions at the second level (i.e. searching over
SOS certificates of degree $6$) on a family of measurements with local
dimension $3$. The measurements were obtained by the applying the
construction in section VIII.A of~\cite{dps:2003} to the entanglement
witness given in equation~(69) of the same reference. Explicitly, they
are given by
\begin{equation} M_\gamma := \ident\otimes \ident - (A_\gamma^{-1} \otimes \ident) Z (A_\gamma^{-1} \otimes
\ident), \qquad \gamma \in [0, 1], \label{eq:dps_example} \end{equation}
where 
\begin{align*}
  A_\gamma &= \diag(1, \gamma, \dots, \gamma) \\
  Z &= 2(\ket{00}\bra{00} + \ket{11}\bra{11} + \ket{22}\bra{22}) \\
  &\qquad\qquad +
      \ket{02}\bra{02} + \ket{10}\bra{10} +\ket{21}\bra{21} -
      3\ket{\psi_+}\bra{\psi_+} \\
  \ket{\psi_+} &= \frac{1}{\sqrt{3}} \sum_{i=1}^{3} \ket{ii}.
\end{align*}
By construction, $h_{\Sep}(M_\gamma) \leq 1$ for all $\gamma \in
[0,1]$. However, it was shown in~\cite{dps:2003} that for sufficiently
small $\gamma$, the optimum value of DPS applied to $M_\gamma$ will be
strictly greater than $1$. Numerically, we find that this behavior
occurs for $\gamma < 0.1$. In Figure~\ref{fig:numerics}, we plot the
optimum value returned by the second level of the hierarchy for a
range of values of $\gamma$ between $0.01$ and $0.07$. We find that
adding the KKT conditions substantially improves the convergence. The
calculations were performed using YALMIP optimization
package~\cite{Lofberg2004,Lofberg2009}, the
SDP solver Mosek~\cite{mosek}, and the SDP preprocessing package frlib~\cite{Permenter14}.

\begin{figure}[h]
\centering
\subfloat[][]{
  \includegraphics[width=0.5\textwidth]{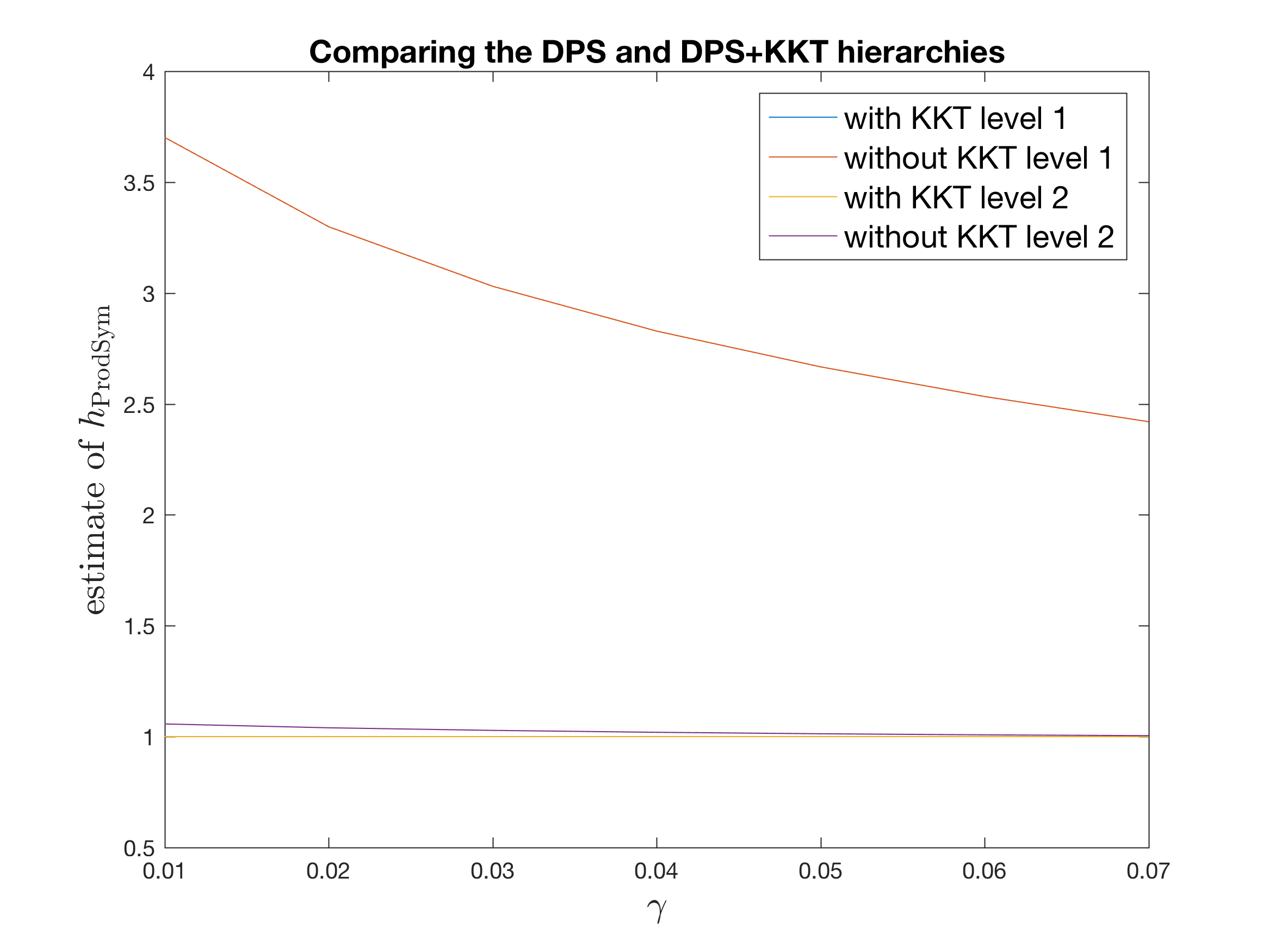}
}
\subfloat[][]{
  \includegraphics[width=0.5\textwidth]{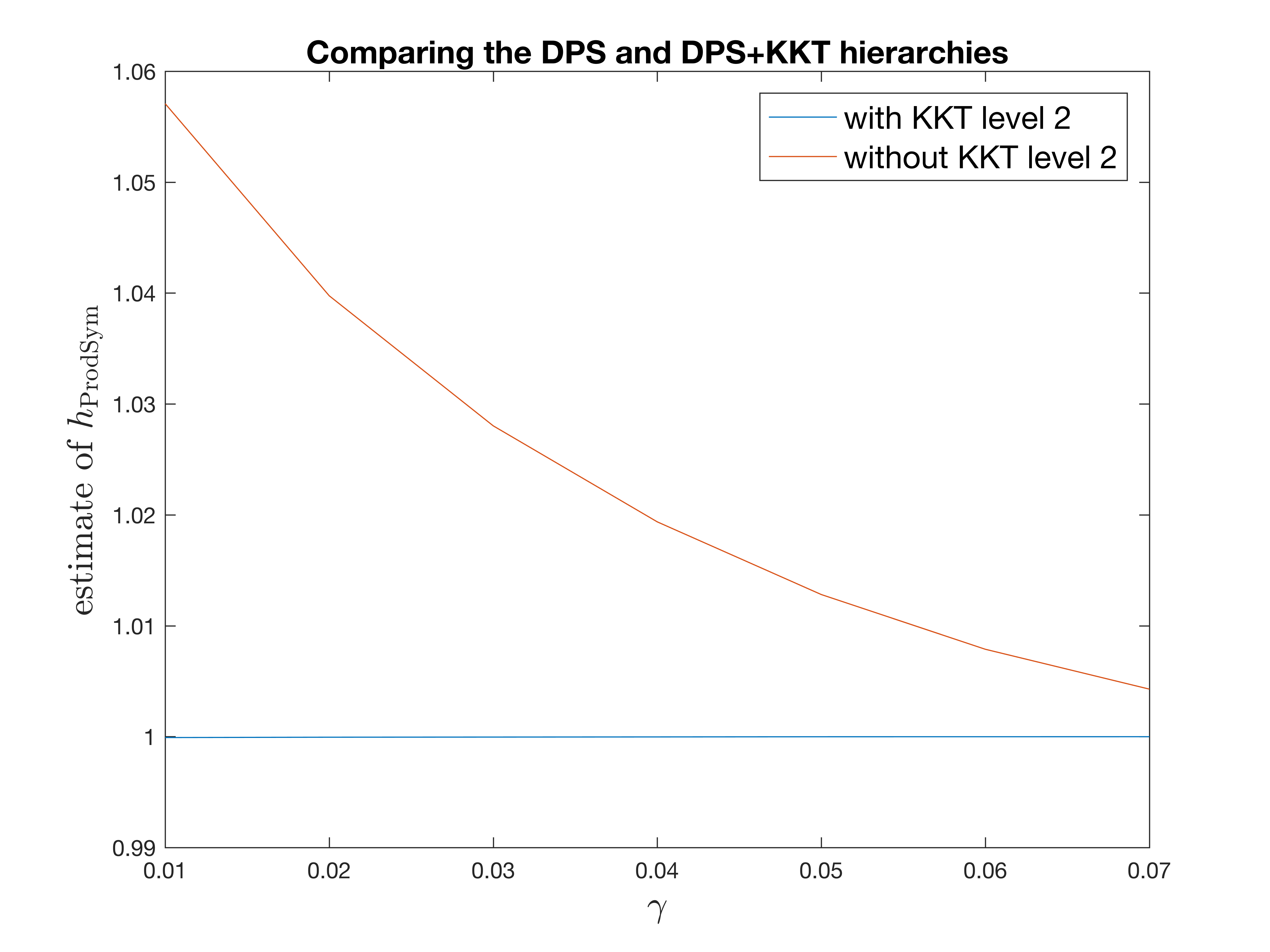}
}
\caption{Performance of hierarchy with and without KKT conditions, for
  the family of measurements $M_\gamma$ in Eq.~\eqref{eq:dps_example}. The true
  value of $h_{\Sep}$ is $\leq 1$ for all $\gamma$. Figure (a) shows
  the performance with and without KKT conditions at both level 1
  (searching over degree-4 SoS certificates) and level 2 (degree-6 SoS
certificates) for both variants of the hierarchy, while figure (b)
shows only level 2. At level 1, the KKT constraints have no effect and
both hierarchies yield the same relaxation. At level 2, the hierarchy
with KKT immediately converges to the true value $1$, within numerical
error, while the
hierarchy without KKT obtains upper bounds that are strictly greater
than $1$.}
\label{fig:numerics}
\end{figure}

\section{Proofs}\label{sec:proofs}

In this section, we will make use of a number of tools from algebraic
geometry, which are described in Appendix~\ref{sec:alg-geom}. At a high level, the proof will proceed as follows: first we show that for
generic $M$, the KKT ideal $I_K$ is zero-dimensional. This implies
that a Gr\"{o}bner basis of exponential degree can be found for
$I_K$. We then complete the proof using a strategy due to Laurent
(Theorem 6.15 of \cite{laurent:2009}): we start with a SOS certificate of
high degree, and then use division by the Gr\"{o}bner basis to reduce
the degree. This will result in a SOS certificate whose degree is the
same order as the degree of the Gr\"{o}bner basis, thus proving the theorem.

\subsection{Generic inputs}
We will now show that, for generic $M$, the KKT ideal is
zero-dimensional, using a dimension-counting argument based on the
theorems in Section~\ref{subsec:alggeo_intersection}. A similar result was
proved in Proposition 2.1 (iii) of \cite{nie:2009}. However, that result required both the
objective function and the constraints to be generic. Since the norm
constraint is fixed independent of the input $M$, this means we cannot
apply the result of \cite{nie:2009} directly. Nevertheless, we find that we can use a
very similar argument.

\begin{lemma}
  For generic $M$, the KKT ideal $I_K$ is zero dimensional. 
\end{lemma}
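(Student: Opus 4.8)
The plan is to show that the variety $V(I_K)$ consists of finitely many points for generic $M$ by a dimension count over the projective completion. Write $x \in \RR^{2n}$ and work over $\CC$. The KKT conditions require the $2 \times 2$ matrix with columns $\nabla f_0(x)$ and $\nabla f_1(x)$ to have rank $< 2$. Since $\nabla f_1(x) = 2x$ and this vanishes only at $x=0$, which is excluded by $f_1(x) = \|x\|^2 - 1 = 0$, the rank-deficiency is equivalent to $\nabla f_0(x)$ being parallel to $x$; that is, there is a scalar $\lambda$ with $\nabla f_0(x) = \lambda x$. Together with $f_1(x)=0$ this is a system in the $2n+1$ unknowns $(x,\lambda)$: $2n$ equations $\partial_i f_0(x) = \lambda x_i$, each of degree $2d-1$ in $x$ and degree $1$ in $\lambda$, plus the one equation $\|x\|^2 = 1$. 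I would show that for generic $M$ this affine system has only isolated solutions, and hence so does the $g_{ij} = f_1 = 0$ system (the $g_{ij}$ are exactly the $2\times 2$ minors, which vanish precisely when such a $\lambda$ exists).

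The key step is a Bertini/genericity argument. First I would homogenize: introduce a homogenizing variable and pass to $\PP^{2n}$ (in the $x$ coordinates) or better, observe that the eigenvalue-type system $\nabla f_0(x) = \lambda x$ is already homogeneous of degree $2d-1$ in $x$, so its solution set is a cone, i.e. a well-defined projective variety $W \subseteq \PP^{2n-1} \times \PP^1$ (or just $\PP^{2n-1}$ after eliminating $\lambda$, since $\lambda$ is determined by $x$ via $\lambda = \langle x, \nabla f_0(x)\rangle / \|x\|^2$ away from the quadric $\|x\|^2 = 0$). I would argue that the incidence variety $\{(M,x) : \nabla f_0^M(x) \parallel x\}$ projects onto the space of all symmetric tensors $M$, and that the generic fiber is finite: by Bertini's theorem, intersecting with the generic $M$ (which moves the $2n$ hypersurfaces $\partial_i f_0 = \lambda x_i$) yields a variety of the expected dimension $0$. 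The main point to check is that the system is "generic enough" — that varying $M$ actually moves all $2n$ equations in sufficiently independent directions — which is where the fixed constraint $f_1$ (noted in the excerpt as the obstacle preventing a direct appeal to \cite{nie:2009}) requires care: one must verify that the base locus on the quadric $\{\|x\|^2 = 0\}$ at infinity does not contribute a positive-dimensional component for generic $M$.

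Concretely, I would run the dimension count as follows. The system $\partial_i f_0(x) = \lambda x_i$ ($i = 1,\dots,2n$) consists of $2n$ equations in the $2n-1+1 = 2n$ projective-plus-one coordinates $([x],\lambda)$; generically one expects finitely many solutions, and Bézout bounds the count by roughly $\prod_i \deg = (2d-1)^{2n}$ (or a mixed-degree / multihomogeneous Bézout number accounting for the degree-$1$ dependence on $\lambda$), which is $d^{O(n)}$ — consistent with the $d^{O(n^2)}$ degree bound later claimed (with room to spare). To promote "expected dimension" to "actual dimension for generic $M$", I would exhibit one $M$ (e.g. a suitably generic diagonal tensor, for which $\nabla f_0$ decouples coordinatewise and the eigenvector equation has only the standard basis directions as solutions) for which the system is zero-dimensional; then zero-dimensionality is a Zariski-open condition on $M$ (the resultant / the relevant determinant in the elimination does not vanish identically), so it holds on a dense open set. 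Finally, since $f_1 = \|x\|^2 - 1$ cuts the cone $V(\langle g_{ij}\rangle)$ down from a finite union of lines through the origin to a finite set of points, $V(I_K)$ is finite, i.e. $I_K$ is zero-dimensional.

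The main obstacle I anticipate is the behavior at infinity on the isotropic quadric $\|x\|^2 = 0$: because $f_1$ is not generic, Bertini's theorem in its basic form does not immediately apply to the combined system, and one must separately control the intersection of the $2n$ moving hypersurfaces with this fixed quadric at infinity, showing it stays of dimension $0$ (or is empty) for generic $M$. This is exactly the gap between the present statement and Proposition 2.1(iii) of \cite{nie:2009}, and handling it by an explicit witness $M$ (as sketched above) together with openness of zero-dimensionality is the cleanest route.
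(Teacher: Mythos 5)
Your overall reformulation (Lagrange multiplier form of the KKT conditions, pass to projective space, apply Bertini, then cut with $f_1$), and your diagnosis that the fixed norm constraint $f_1$ is precisely what blocks a direct appeal to Nie--Ranestad, are both correct and match the paper's starting point. But the proposal leaves unresolved exactly the difficulty you flag, and it is the crux. The paper resolves it with a specific device you don't have: rather than varying $M$ alone, it considers the family of hypersurfaces $\XX = \{ \tilde f_0(\tilde x) - \mu x_0^{2d} = 0\}$ parametrized by $(M,\mu)\in\PP^k$. The auxiliary parameter $\mu$ is essential: for $\mu = 0$ the hypersurfaces $\{\tilde f_0 = 0\}$ all pass through $[1:0:\cdots:0]$ (the point where $x_1=\cdots=x_n=0$), so Bertini's ``no common base point'' hypothesis fails; adding $\mu x_0^{2d}$ destroys this base locus. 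Bertini then gives that $\AA = \UU \cap \XX$ is smooth of codimension $2$, i.e.\ the homogenized Jacobian $\tilde J_\AA$ has rank $2$ everywhere on $\AA$. The paper then uses Euler's identity to discard the $\partial/\partial x_0$ row on the affine part, and a direct check at $x_0 = 0$, to conclude that the \emph{dehomogenized} Jacobian also has rank $2$ on all of $\AA$, hence $\AA\cap\WW=\emptyset$, and a single application of B\'ezout to $\WW\cap\UU$ versus $\XX$ yields $\dim(\WW\cap\UU)=0$. None of this machinery is present in your proposal, and your own stated ``main obstacle'' (the behavior on the isotropic quadric at infinity) is precisely what these two steps handle.

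Two further mismatches are worth noting. First, you aim to show that the full cone $V(\langle g_{ij}\rangle)$ is a finite union of lines --- that is, that the rank-deficiency locus is projectively zero-dimensional on its own. The paper only proves the weaker statement $\dim(\WW\cap\UU)=0$: it never needs $\WW$ itself to be low-dimensional, only the intersection with the constraint variety. Proving the stronger claim is possible in principle but is doing more work than necessary, and it is less robust at infinity. Second, your fallback route (exhibit one explicit $M$ --- a diagonal tensor --- for which zero-dimensionality holds, then invoke Zariski-openness of the condition) is a legitimate and genuinely different strategy, and it would sidestep Bertini entirely. But you have not actually carried out the verification: for $f_0 = \sum_i c_i x_i^{2d}$ the system $2d\,c_i x_i^{2d-1} = \lambda x_i$ factors coordinatewise, and over $\CC$ the solution set modulo scaling \emph{is} finite for generic $c_i$, but establishing this rigorously (including that the lines in the cone not contained in $\{\|x\|^2=0\}$ meet the affine quadric in finitely many points, and that Zariski-openness applies to the correct parameter space of Hermitian $M$'s) is itself a nontrivial computation that you have only gestured at. As written, the proposal identifies the right difficulty but does not close it by either route.
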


The intuition behind the proof is the same reason that the KKT
conditions characterize optimal solutions.  Roughly speaking the KKT
conditions encode the fact that at an optimal solution one should not
be able to increase the objective function without changing one or
more of the constraint equations.  This corresponds to a particular
Jacobian matrix having less than full rank.  Here we will see that
this rank condition on a Jacobian directly implies that the set of
solutions is zero dimensional.  

\begin{proof}
For the proof we will find it is useful to move to complex projective
space $\PP^n$,
parametrized by homogeneous coordinates $\tilde{x} = (x_0, x_1,
\dots, x_n)$.  For a polynomial $p(x)$, we denote its homogenization
by $\tilde{p}(\tilde{x})$. We also
define the following projective varieties
\begin{align*}
  \UU &= \{ \tilde x : \tilde{f}_1(\tilde{x}) = 0 \} \\
  \WW &= \{ \tilde x: \forall \, i, j, \tilde{g}_{ij}(\tilde x) = 0 \} \\
\end{align*}
The variety associated with the KKT ideal $V(I_K)$ is just the affine
part of $\UU \cap \WW$. So it suffices to show that $\UU \cap \WW$ is
finite. We will do this using a dimension counting
argument. Specifically, we will construct a variety of high dimension
that does not intersect $\WW$. By B\'{e}zout's Theorem, this will give us
an upper bound on the dimension of $\WW$.

To find such a variety, consider the
family $\HH$ of all hypersurfaces $\XX$ in $\PP^n$ of the form $\{ \tilde{f}_0(\tilde{x}) - \mu
x_0^{2d} = 0 \}$, parametrized by $\mu\in \bbC$ and the matrix
$M \in \bbC^{n^2 \times n^2}$. Multiplying
$\mu$ and $M$ by a nonzero scalar leaves the associated hypersurface
unchanged, so we can think of $(M, \mu)$ as a point in a projective
space $\PP^k$. We will be interested in the intersection $\AA = \XX
\cap \UU$ of a hypersurface $\XX$ in this family with the feasible set
$\UU$. The Jacobian
matrix $\tilde{J}_\AA$ of such an intersection is given by 
\[ \tilde{J}_\AA = \begin{pmatrix} \frac{\pd}{\pd x_0} (\tilde{f}_0(\tilde{x}) - \mu x_0^{2d}) & \frac{\pd
  \tilde{f}_1(\tilde{x})}{\pd x_0} \\
  \frac{\pd \tilde{f}_0(\tilde{x})}{\pd x_1} & \frac{ \pd \tilde{f}_1(\tilde{x})}{\pd x_1} \\
  \vdots & \vdots \\
  \frac{\pd \tilde{f}_0(\tilde{x})}{\pd x_n} & \frac{\pd \tilde{f}_1(\tilde{x})}{\pd x_n}
  \end{pmatrix}. \]
Let $J_\AA$
denote the submatrix of
$\tilde{J}_\AA$ obtained by removing the first row. We
claim that for a generic choice of $M$ and $\mu$, the matrix $J_\AA$
is of rank 2 everywhere on $\AA$. Since $\WW$ is the set of points
with $\rank \tilde{J}_\AA$, this implies that $\AA \cap \WW = \emptyset$.

Now, to prove the claim, we use Bertini's Theorem (Theorem~\ref{thm:bertini}). The variety $\UU$
is smooth and has dimension $n - 1$, and as long as $M\neq 0$, there are no points in common to all the
hypersurfaces in $\HH$. Thus, by Theorem \ref{thm:bertini}, for a
generic choice of $(M, \mu) \in \PP^k$, the
variety $\AA = \UU \cap \{ \tilde{f}_0(\tilde{x}) - \mu x_0^{2d} = 0 \}$ is
smooth (has no singular points) and has dimension $n-2$. 
This means that $\tilde{J}_\AA$ must have
rank $2$ everywhere on $\AA$. By homogeneity, we know that if
$\tilde{f}_0(\tilde{x}) - \mu x_0^{2d} = 0$, then
$\tilde{f}_0(\lambda \tilde{x})  - \mu (\lambda x_0)^{2d} = 0$ for all $\lambda \neq
0$. If we take the derivative of this expression with respect to
$\lambda$ and set $\lambda=1$, we get that $x_0 \frac{\pd}{\pd x_0} (\tilde{f}_0(\tilde{x}) -\mu x_0^{2d})= -\sum_i
x_i \frac{\pd}{\pd x_i} f_0(\tilde{x})$. Likewise we also find that $x_0 \frac{\pd}{\pd x_0} \tilde{f}_1(\tilde{x}) = -\sum_i
x_i \frac{\pd}{\pd x_i} f_1(\tilde{x})$. So whenever $x_0 \neq 0$, the first row
of $\tilde{J}_\AA$ is in the span of the other rows. 
Hence, for $x_0
\neq 0$, $\rank(\tilde{J}_\AA) = 2$ implies that $\rank(J_\AA) =2$ as
well. This means that that the affine part ($x_0 \neq 0$) of $\AA$ does not
intersect the affine part of $\WW$. It only remains to check the
part at infinity ($x_0 = 0$). We know that since $\AA$ is smooth,
$\tilde{J}_\AA$ has rank $2$ here also. By direct evaluation, we see that
the first row of $\tilde{J}_\AA$ is zero when $x_0 = 0$, so $J_\AA$
has rank $2$ here as well. Therefore, $\AA$ does not intersect $\WW$ anywhere.

Now we complete the proof using a dimension-counting
argument. B\'{e}zout's Theorem (Theorem~\ref{thm:bezout}) states that any two projective
varieties in $\PP^n$, the sum of whose dimensions is at least $n$,
must have a non-empty intersection. Thus, since $\WW \cap \AA = (\WW \cap \UU) \cap \{
\tilde{f}_0(\tilde{x}) = \mu x_0^{2d}\} = \emptyset$, we deduce that
\[ \dim(\WW \cap \UU) + \dim(\{
\tilde{f}_0(\tilde{x}) = \mu x_0^d\}) = \dim(\WW \cap \UU) + n - 1 <
n. \]
This implies that $\WW \cap \UU$ has dimension $0$, i.e. it is a
finite set of points in $\PP^n$. So $\WW \cap \UU \cap \{x_0 = 1\}$ is
a finite set of points in $\CC^n$. But this is precisely the variety
associated with the KKT ideal, or rather its complex
analogue. However, the fact that the KKT equations have a finite set
of solutions in $\CC^n$ implies that their set of solutions in
$\bbR^n$ is also finite.  Thus, the KKT ideal is zero-dimensional
as claimed.
 \end{proof}

For the next result, we will want to consider the ideal generated by a
homogenized version of the KKT conditions. For convenience sake, we
would like all the generators to be homogeneous of the \emph{same}
degree. The polynomials $g_{ij}(x)$ are already homogeneous and have degree
$2d$. The polynomial $f_1(x)$ is not homogeneous and has degree 2. So we
will homogenize it and multiply it by $x_0^{2(d-1)}$ to make it also degree
$2d$. This yields the following ideal
\[ \tilde{I}_K = \left\langle g_{ij}(\tilde{x}), x_0^{2(d-1)} \tilde{f_1}(x) \right\rangle. \]
\begin{lemma}
  The ideal $\tilde{I}_K$ has a Gr\"{o}bner basis in
  the degree ordering whose elements have degree
  $O(d^{\poly(n)})$. Moreover, each Gr\"{o}bner basis element $\gamma_k(\tilde{x})$
  can be expressed in terms of the original generators as $\gamma_k(\tilde{x}) =
  \sum_{ij} u_{ijk}(\tilde{x}) g_{ij}(\tilde{x}) + v_k(\tilde{x})(x_0^{2(d-1)} \tilde{f}_1(\tilde{x}))$ where
    $\deg(u_{ijk}(\tilde{x})), \deg(v_k(\tilde{x})) = O(d^{\poly(n)})$.
    \label{lem:kkt_groebner}
\end{lemma}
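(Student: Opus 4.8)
The plan is to combine two classical effective-bounds results: first, the Dubé-type degree bound for Gröbner bases of polynomial ideals, and second, an effective membership (ideal-representation) bound. For the first, recall that for an ideal $\tilde I_K = \langle h_1,\ldots,h_s\rangle \subseteq \mathbb{C}[x_0,\ldots,x_n]$ generated by polynomials of degree at most $D$ in $N+1 = n+1$ variables, Dubé's theorem (and its refinements, e.g.\ by Mayr--Ritscher in the zero-dimensional case) gives a reduced Gröbner basis in any degree ordering whose elements have degree at most a bound of the form $2\bigl(D^2/2 + D\bigr)^{2^{N-1}}$, i.e.\ doubly-exponential in $N$ but polynomial in $D$ for fixed $N$. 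Here $D = 2d$ (the common degree of the $g_{ij}(\tilde x)$ and of $x_0^{2(d-1)}\tilde f_1(\tilde x)$) and $N = n$, so this bound is $d^{O(2^n)} = d^{O(\poly(n))}$ in the sense used in the statement. I would additionally invoke the previous lemma: since $I_K$ is zero-dimensional (and this carries over to $\tilde I_K$, whose projective zero set is the finite set $\WW \cap \UU$), one may optionally use the sharper zero-dimensional bounds, but it is not needed — the generic doubly-exponential bound already matches the claimed $O(d^{\poly(n)})$.

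Second, for the representation clause, I would note that each reduced Gröbner basis element $\gamma_k$ is obtained from the original generators $g_{ij}(\tilde x)$ and $x_0^{2(d-1)}\tilde f_1(\tilde x)$ by Buchberger's algorithm: one forms $S$-polynomials and reduces them, and every such operation is a polynomial combination of the inputs. Running Buchberger to completion therefore expresses $\gamma_k = \sum_{ij} u_{ijk}(\tilde x)\,g_{ij}(\tilde x) + v_k(\tilde x)\,\bigl(x_0^{2(d-1)}\tilde f_1(\tilde x)\bigr)$ for explicit cofactors. To bound $\deg(u_{ijk}), \deg(v_k)$ one appeals to the effective Nullstellensatz/ideal-membership bounds (Hermann's classical bound, or the sharper bounds of Kollár / Dubé): if $p \in \langle h_1,\ldots,h_s\rangle$ with $\deg p \le E$ and $\deg h_i \le D$ in $N+1$ variables, then $p = \sum b_i h_i$ with $\deg(b_i h_i) \le E + (sD)^{2^{O(N)}}$ — again doubly-exponential in $N$, polynomial in $D$. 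Applying this with $p = \gamma_k$ (whose degree we have just bounded by $d^{O(\poly(n))}$), $D = 2d$, $s = O(n^2)$, $N = n$ yields cofactor degrees $d^{O(\poly(n))}$, as claimed.

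The main obstacle — more precisely, the only real subtlety — is making sure the two cited bounds are of the stated \emph{shape} $d^{O(\poly(n))}$ rather than, say, doubly-exponential in $d$ as well. This is why the paper works with a \emph{single} common degree $D = 2d$ for all generators (the reason for homogenizing $f_1$ and multiplying by $x_0^{2(d-1)}$): the standard degree bounds are of the form $D^{2^{O(N)}}$, which for fixed $N = n$ is polynomial in $D = 2d$ and hence $d^{O(\poly(n))}$, but this would fail if different generators had wildly different degrees or if $d$ appeared in the exponent's tower. A secondary point is to verify that $\tilde I_K$ is still zero-dimensional after the homogenization (so that the Gröbner basis genuinely "captures" the finite solution set) — this follows immediately from the previous lemma, since the projective variety $V(\tilde I_K) = \UU \cap \WW$ was shown there to be finite. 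Everything else is bookkeeping: checking that $D = 2d$, $N = n$, $s = O(n^2)$, and plugging into the cited theorems.
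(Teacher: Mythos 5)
Your overall strategy (Dub\'e/Mayr--Ritscher degree bound for the Gr\"obner basis, then an effective membership bound for the cofactors) is in the right neighborhood, and you correctly diagnosed why the paper arranges for all generators to have the same degree $2d$. But there is a quantitative gap that breaks the proof: you write that the generic Dub\'e-type bound $2(D^2/2 + D)^{2^{N-1}}$ is ``$d^{O(2^n)} = d^{O(\poly(n))}$,'' and you use the same reasoning for the Hermann/Koll\'ar membership bound. That identification is false: $2^n$ is not $\poly(n)$, so the generic bound gives a Gr\"obner basis degree doubly-exponential in $n$ (i.e.\ $d^{\exp(n)}$), which is far weaker than the $d^{O(n^2)}$ the lemma needs (and which the downstream Theorem~\ref{thm:main} and the $\exp(\poly(n))\poly\log(1/\eps)$ runtime depend on). Consequently, your remark that invoking the previous lemma is ``optional'' is exactly backwards: the zero-dimensionality of $I_K$ is what makes the result true. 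The paper feeds the fact that $\tilde I_K$ has affine dimension $r=1$ (it is the homogenization of a zero-dimensional ideal, hence a cone over finitely many projective points) into the \emph{dimension-refined} bound of Proposition~\ref{prop:groebner_degree}, $2(\tfrac12 D^{N-r} + D)^{2^r}$; at $r=1$ the outer exponent is $2^1 = 2$ rather than $2^{N-1}$, collapsing the double exponential to $D^{O(N)} = d^{O(\poly(n))}$. (Also note you call $\tilde I_K$ zero-dimensional; as an affine ideal in $\CC^{n+1}$ it is one-dimensional, which is the value of $r$ you must plug in.)

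The representation clause has the same issue and the paper avoids it by a much lighter argument that you should substitute for the effective Nullstellensatz. Since the generators $\tilde g_{ij}$ and $x_0^{2(d-1)}\tilde f_1$ are all homogeneous of degree exactly $2d$, and (by Proposition~\ref{prop:groebner_homo}) each Gr\"obner basis element $\tilde\gamma_k$ can be taken homogeneous of some degree $D_k \le D$, one can start from any representation $\tilde\gamma_k = \sum u_{ijk}\tilde g_{ij} + v_k\,(x_0^{2(d-1)}\tilde f_1)$ and simply drop from $u_{ijk}, v_k$ every monomial except those of degree $D_k - 2d$: all other degrees on the right-hand side must cancel, so equality is preserved and the cofactor degrees are at most $D_k \le D = d^{O(\poly(n))}$ with no appeal to Hermann/Koll\'ar at all. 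This homogeneity truncation is both sharper and more elementary than the ideal-membership bound, which as stated would again incur a $2^{O(n)}$ exponent.
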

\begin{proof}
  Let $D$ be the degree of the Gr\"{o}bner basis.  Since the KKT ideal is zero dimensional, the homogenized KKT ideal
  is one-dimensional (that is, $V(\tilde{I}_K)$ is one-dimensional
  when viewed as an affine variety in $\CC^{n+1}$). So the result of
  Proposition~\ref{prop:groebner_degree} evaluated at $r = 1$ gives
  a bound $D = O(d^{n^2})$.
  Moreover, since
  the ideal is homogeneous, by Proposition~\ref{prop:groebner_homo} the Gr\"{o}bner basis elements can be
  chosen to be homogeneous as well.
  We will denote this Gr\"{o}bner basis of homogeneous polynomials as
  $\{ \tilde{\gamma}_k(\tilde{x}) \}$.

  Now, we know that any given Gr\"{o}bner basis element can be
  expressed in terms of the original generators from \eq{gij-def}:
  \[ \tilde{\gamma}_k(\tilde{x}) = \sum_{ij} u_{ijk}(\tilde{x}) \tilde{g}_{ij}(\tilde{x}) +  v_k(\tilde{x}) (x_0^{2d} \tilde{f}_1(\tilde{x})), \]
  where the polynomials $u_{ij}(\tilde{x})$ and $v_k(\tilde{x})$ could have arbitrarily high
  degree. Let the degree of $\tilde{\gamma}_k(\tilde{x})$ be $D_k \leq D$. Since it is
  homogeneous, all the terms on the RHS must be of degree
  $D_k$. Moreover, we know that $\tilde{g}_{ij}(\tilde{x})$ and $x_0^{2(d-1)}
  \tilde{f}_1(\tilde{x})$ are homogeneous of degree $2d$. Therefore, any terms in
  $u_{ijk}(\tilde{x})$ or $v_k(\tilde{x})$ with degree higher than $D_k$ will result only in terms
  of degree higher than $D_k+2d$ on the RHS. We know that these terms
  must cancel out to zero. Therefore, we can just drop all terms with
  degree higher than $D_k$ from $u_{ijk}(\tilde{x})$ and $v_k(\tilde{x})$ and equality will
  still hold in the equation above. Thus, we have shown that every
  Gr\"{o}bner basis element can be expressed in terms of the original
  generators with coefficients of degree at most $D$ as desired.
\end{proof}

Now we prove Theorem \ref{thm:sos_cert}. The argument is the same as
case (i) of Theorem 6.15 in \cite{laurent:2009}.

\begin{proof}
  Let $\{\tilde{\gamma}_i(\tilde x)\}$ be a degree-ordered Gr\"{o}bner basis for
  $\tilde{I}_{KKT},$ as in the previous proposition. By
  dehomogenizing, we get a Gr\"{o}bner basis $\{\gamma_i(x)\}$ for
  $I_k$. Since $1 - \sum_i x_i^2 \equiv 0 \pmod{I_{KKT}}$, the KKT ideal satisfies the
  Archimedean condition and Theorem \ref{thm:psatz} holds.
  Thus, there exists some $\sigma(x)$ SOS and
  $g(x) \in I_K$ such that $\nu - f_0(x) = \sigma(x) + g(x)$. Let us
  write $\sigma(x)$ explicitly as
  \[ \sigma(x) = \sum_a s_a(x)^2. \]
  Since $I_{KKT}$ is zero-dimensional, by Proposition~\ref{prop:groebner_zerod_division}, each term $s_a(x)$
  can be written as $s_a(x) = \sum a_{ak}(x) \gamma_k(x) + u_a(x)
  \equiv g_a(x) + u_a(x)$, where $\deg(u_a(x)) \leq nD$ and $g_a(x) \in
  I_{KKT}$. If we substitute this decomposition into the expression
  for $\sigma(x)$, we get
  \[ \sigma(x) = \sum_a u_a(x)^2 + g'(x), \]
  where $g'(x) \in I_{KKT}$. We can combine the terms in $I_{KKT}$ to
  get the following expression for the SOS certificate:
  \[\nu - f_0(x) = \sigma'(x) + g''(x), \]
  where $g''(x) \in I_{KKT}$ and $\deg(\sigma'(x)) \leq 2nD = d^{O(n^2)}$. Now, the LHS of this
  expression has degree $2d < \deg(\sigma'(x))$, so $g''(x)$ must also have degree
  $d^{O(n^2)}$. By Proposition~\ref{prop:groebner_division}, it can
  be expressed as
  \[ g''(x) = \sum h_{k}(x) \gamma_{k}(x), \]
  where $\deg(h_{k}(x) \gamma_{k}(x)) = d^{O(n^2)}$. Using
  Lemma~\ref{lem:kkt_groebner}, we can 
  express this in terms of the original generators as
  \[ g''(x) = \sum_{ijk} h_{k}(x) u_{ijk}(x) g_{ij}(x). \]
  We know that $\deg(u_{ijk}(x)) = d^{O(n^2)}$. Therefore, $g'(x) \in
  I_K^m$ for $m = d^{O(n^2)}$. This proves the theorem.
\end{proof}

\subsection{An algorithm for all inputs}
We have shown that for generic $M$, there exists a SOS certificate of
low degree for the optimization problem
\eqref{eq:hsep_kkt_sos}. However, for nongeneric $M$, it is possible
that no certificate of low degree exists, so the SOS formulation of
the hierarchy may not converge within $d^{O(n^2)}$ levels. In this section, we will
show that this problem goes away if we switch to the moment matrix
formulation \eqref{eq:hsep_kkt_moment} of the hierarchy. We will show that this formulation
converges in $d^{O(n^2)}$ levels for \emph{any} input $M$. First,
we show that the SDPs of the moment hierarchy are well behaved in the
sense that they satisfy \emph{Slater's condition} for any input
$M$. This is the condition that either the primal or dual feasible set
of the SDP should have a nonempty relative interior. To show this, we use
the following result from~\cite{trnovska:2005, josz:2014}.
\begin{proposition} 
  For a given SDP, let $\mathcal{P}, \mathcal{D},$ and $\mathcal{P}^*$ be the primal
  feasible set, dual feasible set, and set of primal optimal points,
  respectively. Then $\mathcal{P}$ and
  $\mathrm{interior}(\mathcal{D})$ are nonempty iff $\mathcal{P}^*$
  is nonempty and bounded.
\label{prop:slater}
\end{proposition}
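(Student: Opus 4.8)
The plan is to reduce to the standard conic form and argue by elementary convex analysis plus one hyperplane-separation step. Write the SDP as $\max\{\langle C,X\rangle : \mathcal{A}(X)=b,\ X\succeq 0\}$ with dual $\min\{\langle b,y\rangle : \mathcal{A}^{*}(y)-C\succeq 0\}$, so that $\mathcal{P}=\{X\succeq 0 : \mathcal{A}(X)=b\}$, $\mathcal{D}=\{y : \mathcal{A}^{*}(y)-C\succeq 0\}$, and $\mathcal{P}^{*}$ is the set of primal maximizers. The two implications require genuinely different arguments, so I would treat them separately.

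For the forward implication I would pick $y_{0}\in\mathrm{interior}(\mathcal{D})$, so that $S_{0}:=\mathcal{A}^{*}(y_{0})-C\succ 0$, and use the identity $\langle C,X\rangle=\langle b,y_{0}\rangle-\langle S_{0},X\rangle$ valid for every $X\in\mathcal{P}$. Since $\langle S_{0},X\rangle\ge 0$ this bounds the primal value above by $\langle b,y_{0}\rangle$; and since $\langle S_{0},X\rangle\ge\lambda_{\min}(S_{0})\,\Tr[X]$ and $X\succeq 0$, fixing any value $\gamma$ attained on $\mathcal{P}$ (nonempty by hypothesis) shows that the superlevel set $\{X\in\mathcal{P}:\langle C,X\rangle\ge\gamma\}$ has $\Tr[X]$, hence $\|X\|$, uniformly bounded. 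That set is also closed (cut out by $X\succeq 0$, affine equalities, and one affine inequality), hence compact and nonempty, so the continuous functional $\langle C,\cdot\rangle$ attains its maximum on it; that maximum is the global one, so $\mathcal{P}^{*}\neq\emptyset$, and $\mathcal{P}^{*}$ lies inside the compact set, hence is bounded.

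For the reverse implication, $\mathcal{P}\supseteq\mathcal{P}^{*}\neq\emptyset$ is immediate, so the work is to produce a strictly feasible dual point. First I would translate ``$\mathcal{P}^{*}$ nonempty and bounded'' into a statement about the recession cone $\mathcal{R}:=\{D\succeq 0:\mathcal{A}(D)=0\}$ of $\mathcal{P}$: because the maximum is attained and finite, $\langle C,D\rangle\le 0$ for every $D\in\mathcal{R}$ (otherwise $X^{*}+tD$ drives the objective to $+\infty$), while the recession cone of the closed convex set $\mathcal{P}^{*}$ is exactly $\{D\in\mathcal{R}:\langle C,D\rangle=0\}$, which is trivial iff $\mathcal{P}^{*}$ is bounded; hence $\langle C,D\rangle<0$ for every nonzero $D\in\mathcal{R}$. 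Then I would argue by contradiction: if the affine set $T:=\{\mathcal{A}^{*}(y)-C:y\in\RR^{m}\}$ missed the interior of the PSD cone, a separating hyperplane would give a nonzero symmetric $Z$ with $\langle Z,W\rangle\ge 0$ for all $W\succeq 0$ (so $Z\succeq 0$, by self-duality of the PSD cone) and with $\langle \mathcal{A}(Z),y\rangle=\langle Z,\mathcal{A}^{*}(y)\rangle$ bounded above over all $y$ (so $\mathcal{A}(Z)=0$, and then $\langle C,Z\rangle\ge 0$) --- a nonzero element of $\mathcal{R}$ violating the previous sentence. Hence $T$ meets the interior of the PSD cone, i.e. $\mathrm{interior}(\mathcal{D})\neq\emptyset$.

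The forward direction is routine. The part I expect to be the main obstacle is the reverse direction: getting the recession-cone translation exactly right (in particular, that it is boundedness of the \emph{optimal} set, not of $\mathcal{P}$ itself, that is available, and that this is equivalent to strict negativity of $\langle C,\cdot\rangle$ on $\mathcal{R}\setminus\{0\}$), and then running the separation argument cleanly, since the PSD cone is non-polyhedral and Farkas' lemma does not apply directly. One should also check that degeneracies --- $\mathcal{A}$ not surjective, or $C\in\mathrm{span}\{A_{i}\}$ --- do not break the separation step; they do not, because it uses only that $T$ is a nonempty affine set and that the PSD cone has nonempty interior.
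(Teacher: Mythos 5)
The paper does not prove this proposition; it is quoted directly from the cited references~\cite{trnovska:2005, josz:2014}, so there is no in-paper argument to compare against. Your proof is correct and self-contained: the forward direction via the strictly feasible dual slack $S_0\succ 0$ giving $\langle C,X\rangle=\langle b,y_0\rangle-\langle S_0,X\rangle$ and the trace bound $\langle S_0,X\rangle\geq\lambda_{\min}(S_0)\Tr[X]$ correctly produces a compact superlevel set; and the reverse direction correctly reduces boundedness of $\mathcal{P}^*$ to triviality of the recession cone $\{D\in\mathcal{R}:\langle C,D\rangle=0\}$, then runs a standard hyperplane separation of the affine set $\{\mathcal{A}^*(y)-C\}$ from the open cone $\{W\succ 0\}$ to extract a nonzero $Z\succeq 0$ with $\mathcal{A}(Z)=0$ and $\langle C,Z\rangle\geq 0$, contradicting the recession-cone condition. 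Both the self-duality step and the observation that a linear functional bounded above on all of $\RR^m$ must vanish are applied correctly. This is essentially the proof one finds in Trnovsk\'{a}'s paper, so while the paper itself offers no proof, your argument reconstructs the standard one faithfully.
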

In our case, let~\eqref{eq:hsep_kkt_moment} be the primal
and~\eqref{eq:hsep_kkt_sos} be the dual. The primal feasible set is
nonempty, since the true optimizing point for the unrelaxed problem
$h_{\ProdSym}$ is always feasible. Moreover, primal feasible set is
compact. Thus, the primal \emph{optimal} set $\mathcal{P}^*$ is nonempty
and bounded, and thus by Proposition~\ref{prop:slater}, Slater's
condition holds.

Slater's condition implies strong duality, so for generic $M$,
\eqref{eq:hsep_kkt_moment} and \eqref{eq:hsep_kkt_sos} give the same
optimum value. It also implies that the SDP value is a differentiable
function of the input parameters. We use this to extend our results to
non-generic inputs $M$.
\begin{theorem}
  For all input $M$, the hierarchy \eqref{eq:hsep_kkt_moment} converges to the optimum
  value of \eqref{eq:hprodsym} at level
  $r = d^{O(n^2)}$.
\end{theorem}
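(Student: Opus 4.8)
The plan is to leverage the genericity result (Theorem~\ref{thm:sos_cert}) together with the Slater/strong-duality structure just established, and to pass from generic $M$ to arbitrary $M$ by a continuity and limiting argument on the SDP value. First I would fix $r = d^{O(n^2)}$ large enough that Theorem~\ref{thm:sos_cert} applies: for generic $M$, the primal \eqref{eq:hsep_kkt_moment} and the dual \eqref{eq:hsep_kkt_sos} at level $r$ both equal the true optimum $h_{\ProdSym(\RR,2n,d)}(M)$. Call the level-$r$ primal value $\mathrm{opt}_r(M)$ and the unrelaxed value $\mathrm{opt}(M)$; we always have $\mathrm{opt}_r(M) \ge \mathrm{opt}(M)$ since the true optimizer is feasible for the relaxation, and the content of Theorem~\ref{thm:sos_cert} is that equality holds on an open dense set of $M$.

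Next I would argue that both $\mathrm{opt}_r(M)$ and $\mathrm{opt}(M)$ are continuous in $M$. For $\mathrm{opt}_r$: Slater's condition holds for all $M$ by Proposition~\ref{prop:slater} and the paragraph after it (the primal feasible set is nonempty — the true optimizer lies in it — and compact, so $\mathcal P^*$ is nonempty and bounded), and Slater's condition for an SDP whose feasible set varies continuously (here only the objective $\langle \P(M\otimes\ident^{\otimes r}),\rho\rangle$ depends on $M$, linearly; the constraint matrices $\P(A_\alpha\otimes\Gamma_{ij})$ depend on $M$ only through the fixed polynomials $g_{ij}$, which are themselves continuous — indeed polynomial — in the entries of $M$) implies the value is a continuous (in fact, by the remark in the text, differentiable) function of $M$. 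For $\mathrm{opt}(M) = h_{\ProdSym}(M)$: this is $\max_{\rho\in\ProdSym(n,d)}\langle M,\rho\rangle$, a maximum of a linear function over a fixed compact convex set, hence $1$-Lipschitz in $M$ in the appropriate norm.

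Finally, given an arbitrary $M_0$, pick any sequence of generic $M_k \to M_0$ (possible since the generic set is dense). For each $k$, $\mathrm{opt}_r(M_k) = \mathrm{opt}(M_k)$. Taking $k\to\infty$ and using continuity of both sides gives $\mathrm{opt}_r(M_0) = \mathrm{opt}(M_0)$, i.e. the moment hierarchy \eqref{eq:hsep_kkt_moment} converges at level $r = d^{O(n^2)}$ for $M_0$ as well. The main obstacle I anticipate is making the continuity of $\mathrm{opt}_r$ fully rigorous: one must be careful that Slater's condition is what licenses passing to the limit (without it, SDP values can jump — the primal and dual optima could disagree in the limit), and one should check that the constraint data $\P(A_\alpha\otimes\Gamma_{ij})$ genuinely varies continuously with $M$, which it does because $g_{ij}$ is built from partial derivatives of $f_0 = \langle M, x^{\otimes 2d}\rangle$ and hence depends polynomially on $M$. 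Given the preceding development this is routine, but it is the step where the genericity-to-universality bridge actually gets crossed, so it deserves the most care.
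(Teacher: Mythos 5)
Your proof is correct and follows essentially the same path as the paper's: establish continuity of both $h_{\ProdSym}$ and the level-$r$ moment SDP value (the latter via Slater's condition and the Shapiro result on differentiability of SDP values), then use agreement on the open dense set of generic $M$ to conclude agreement for all $M$. Your explicit check that the constraint data $\P(A_\alpha\otimes\Gamma_{ij})$ depends continuously --- indeed polynomially --- on $M$ (since $g_{ij}$ is built from partial derivatives of $\langle M, x^{\otimes 2d}\rangle$) makes explicit a subtlety the paper leaves implicit when invoking Shapiro's theorem, but the underlying argument is the same.
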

\begin{proof}
  For a given $M$, let $f^*_{\text{mom}, r}(M)$ be the optimum value of
  the $r$-th level of the hierarchy \eqref{eq:hsep_kkt_moment}. It is
  easy to see that $h_{\ProdSym}(M)$ is a continuous function of
  $M$~\footnote{One way to show this is to note that $h_{\ProdSym}(M)$ is a
    norm of $M$.}. We claim
  that $f^*_{\text{mom},r}(M)$ is also continuous. Indeed, Theorem 10 of
  \cite{shapiro:1997} states that if an SDP satisfies Slater's
  condition and has a nonempty bounded feasible set for all input
  parameters, then the optimum value is a differentiable function of
  the inputs. By the preceding discussion, these conditions hold for the moment hierarchy for all
  $M$, so $f^*_{\text{mom},r}(M)$ is indeed continuous.

  Now, by the remarks above, $h_{\ProdSym}(M) = f^*_{\text{mom},r}(M)$ for all generic
  $M$. Recall from Section~\ref{subsec:alg-geom-basics} that the set
  of generic $M$ is an open, dense set, according to the standard
  topology. Thus, since both functions $h_{\ProdSym}$ and $f^*_{\text{mom},r}$ are continuous and agree on an
  open dense subset, $h_{\ProdSym}(M)
  = f^*_{\text{mom},r}(M)$ for \emph{all} $M$. 
\end{proof}
\begin{corollary}
  For all input $M$, $h_{\ProdSym}(M)$ can be approximated up to
  additive error $\epsilon$ in time $O(d^{\poly(n)} \poly\log(1/\epsilon))$.
\end{corollary}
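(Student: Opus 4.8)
The plan is to instantiate the theorem just proved at the level $r = d^{O(n^2)}$ at which the moment hierarchy \eqref{eq:hsep_kkt_moment} already has optimal value equal to that of \eqref{eq:hprodsym}, namely $h_{\ProdSym}(M)$, and then to observe that solving this single fixed semidefinite program numerically to additive accuracy $\epsilon$ costs only $d^{\poly(n)}\poly\log(1/\epsilon)$.

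First I would fix $r = d^{O(n^2)}$ from the preceding theorem, so that the optimal value of the level-$r$ program \eqref{eq:hsep_kkt_moment} equals $h_{\ProdSym}(M)$ exactly. Next I would bound the size of that program. After using the permutation symmetry imposed by $\P$ to eliminate the redundant coordinates, as in Section IV of \cite{dps:2003}, \eqref{eq:hsep_kkt_moment} becomes an SDP over $m \times m$ matrices with $m = \binom{2n + 2(d+r) - 1}{2(d+r)} \leq \bigl(2n + 2(d+r)\bigr)^{2n}$, whose linear constraints---one for each pair $(i,j)$ with $1 \le i,j \le 2n$ and each monomial $\alpha$ of degree $2r$ in $2n$ variables---number at most $(2n)^2 \binom{2n+2r-1}{2r} = \poly(m)$. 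Since $d + r = d^{O(n^2)}$, this gives $m = d^{O(n^3)} = d^{\poly(n)}$. I would then invoke the complexity bound for semidefinite programming recalled in Section~\ref{sec:sos}: an SDP over $m \times m$ matrices whose feasible points have norm at most $R$ can be solved to additive accuracy $\epsilon$ in time $O(\poly(m)\,\poly\log(R/\epsilon))$. The feasible set of \eqref{eq:hsep_kkt_moment} is compact---as already observed in the discussion preceding Proposition~\ref{prop:slater}---so $R$ may be taken polynomial in $m$ and the factor $\poly\log(R/\epsilon)$ is absorbed into $\poly(n)\poly\log(1/\epsilon)$. Since $M$ itself must be supplied with finite precision, I would also round its entries to $\poly(n)\log(1/\epsilon)$ bits; this perturbs $\langle M, x^{\ot 2d}\rangle$ on the unit sphere, and hence $h_{\ProdSym}(M)$, by at most $\epsilon$, and changes the runtime only through the $\poly\log(1/\epsilon)$ factor. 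Multiplying out, the total running time is $\poly(m)\poly\log(1/\epsilon) = d^{\poly(n)}\poly\log(1/\epsilon)$.

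The only genuinely new thing to check beyond the preceding theorem is that an \emph{exact}-convergence statement about \eqref{eq:hsep_kkt_moment} really does yield a $\poly\log(1/\epsilon)$-time \emph{numerical} algorithm; this in turn requires the feasible set to have polynomially bounded norm and the SDP to be well posed, both of which were already established above (compactness of the primal feasible set, together with Slater's condition via Proposition~\ref{prop:slater}, which ensures strong duality and that the value returned by the solver is a genuine additive-$\epsilon$ approximation of $h_{\ProdSym}(M)$). Hence no new ideas are required, and the corollary follows by combining the preceding theorem with the standard polynomial-time solvability of semidefinite programs.
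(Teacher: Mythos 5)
Your argument is correct and is essentially the paper's own argument, which (for the analogous earlier corollary) simply says the result follows from the finite-convergence theorem together with the fact that SDP values can be computed in time polynomial in the dimension and constraint count and polylogarithmic in the inverse precision; you have just filled in the routine accounting (size of the symmetry-reduced SDP, boundedness of the feasible region, and rounding of $M$) that the paper leaves implicit.
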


\section{Discussion and Open Questions}

Adding the KKT conditions provides a new way of sharpening the
familiar DPS hierarchy for testing separability.  We have given some
evidence that its asymptotic performance is superior to that of the
original DPS hierarchy.    Indeed, \cite{SO12} shows that even for
constant $n$, a variant of the $r^{\text{th}}$ DPS hierarchy has error
lower-bounded by $\Omega(1/r)$.  But our hierarchy converges in a
constant number of steps for any fixed local dimension. 

Does this mean that our hierarchy has other asymptotic improvements
over the DPS hierarchy at lower values of $r$?  We have seen already
cases in which DPS dramatically outperforms the weaker
$r$-extendability hierarchy.  For example, if $M$ is the projector
onto an $n$-dimensional maximally entangled state, then its maximum
overlap with PPT states is $1/n$ while its maximum overlap with
$r$-extendable states is $\geq 1/r$.  A more sophisticated example of
this scaling based on an $M$ arising from a Bell test related to the
unique games problem is in \cite{BuhrmanRSW11}.  One of the major open
questions in this area is whether low levels of SDP hierarchies such
as DPS can resolve hard optimizations problems of intermediate
complexity such as the unique games problem~\cite{BBHKSZ12}.

\section*{Acknowledgments}
AWH was funded by NSF grant CCF-1111382 and CCF-1452616. AN was
funded by a Clay Fellowship. AN also thanks Cyril Stark for helpful
conversations.  All three authors (AWH, AN and XW) were funded by ARO
contract W911NF-12-1-0486.

\appendix
\section{Algebraic Geometry} \label{sec:alg-geom}
\label{subsec:alg-geom-basics}

In this paper we will use some basic tools from algebraic
geometry, which we define in this section. The material presented here
can all be found in basic textbooks like~\cite{clo:1996,harris:1992}.

At the most basic level, algebraic geometry is about sets of zeros of
polynomial functions. Throughout this paper, we will be working with polynomials in $n$
complex variables $x_1, \dots, x_n$. We denote the ring of such polynomials by
$\CC[x_1, \dots, x_n]$. A fundamental concept in algebraic geometry is
the polynomial ideal:
\begin{definition}
The \emph{polynomial ideal} $I$ generated by polynomials $g_1(x), \dots,
g_k(x) \in \CC[x_1, \dots, x_n]$ is the set
\[ I = \left\{ \sum_{i=1}^k a_i(x) g_i(x) : a_i(x) \in \CC[x_1, \dots, x_n]
\right\}. \]
\end{definition}
The polynomials $g_i(x)$ are called a
\emph{generating set} for the ideal, and we write $I =\langle g_1(x),
\dots, g_k(x) \rangle$. Note that the same ideal can be generated by many
different generating sets.

Another fundamental concept is the algebraic variety:
\begin{definition}
A set $V \in \CC^n$ is called an \emph{(affine)
  algebraic variety} if $V = \{ x: u_1(x) = \dots = u_k(x) = 0\}$ for
some polynomials $u_1(x), \dots, u_k(x)$.
\end{definition}

Every ideal $I$ has an associated
variety $V(I)$, which is the set of common zeros of all polynomials in
$I$ (or equivalently, the set of common zeros of all the generators of
$I$ for any generating set).

In this paper, we will be using some theorems concerning intersections of
varieties. These properties are most conveniently stated not in
$\CC^n$, but in the complex \emph{projective} space $\PP^n$. There are
several ways to define $\PP^n$, but for our purposes it will be most
convenient to use \emph{homogeneous coordinates}: we define $\PP^n$ as the
set of all points $(x_0, x_1, \dots, x_n) \in
\CC^{n+1} - \{0\}$ up to multiplication by a nonzero constant. Thus,
$(x_0, x_1, \dots, x_n)$ denotes the same point as $(\lambda x_0,
\lambda x_1, \dots, \lambda x_n)$. Henceforth, we will denote the
homogeneous coordinates using $\tilde{x}$. The hyperplane $x_0 = 0$
can be thought of as the set of ``points at infinity.'' 

We define a \emph{homogeneous polynomial} to be the sum of monomial
terms that are all of the same degree. Given any polynomial function
$f(x)$ on $\CC^n$ of degree $d$, we define its homogenization by
$\tilde{f}(\tilde{x}) = x_0^d f(x_1/x_0, \dots, x_n/x_0)$. Using these
concepts, we can define a
\emph{projective algebraic variety} as a set of the form $\VV = \{ \tilde{x} \in
\PP^n : \tilde{u}_1(\tilde{x}) = \dots = \tilde{u}_k(\tilde{x}) =
0\}$, where $\tilde{u}_i(\tilde{x})$ are homogeneous polynomials. Given any affine variety in $\CC^n$, we can
produce a corresponding projective variety on $\PP^n$ by homogenizing the
defining polynomials. Likewise, we can go from a projective variety to
an affine variety by dehomogenizing, i.e.~intersecting with $\{x_0 =
1\}$. 

In general, an algebraic variety may not be a smooth manifold in
$\CC^n$ or $\PP^n$---it may have one or more singular points. A
criterion for smoothness can be obtained from the Jacobian matrix
associated with the variety. The Jacobian matrix of the variety $V =
\{x \in \CC^n: u_1(x) = \dots = u_k(x) = 0\}$ is given by
\[ J = \begin{pmatrix} \frac{\pd u_1(x)}{\pd x_1} & \dots & \frac{\pd
    u_k(x)}{\pd x_1} \\
  \vdots & \ddots & \vdots \\
   \frac{\pd u_1(x)}{\pd x_n} & \dots & \frac{\pd
    u_k(x)}{\pd x_n} \end{pmatrix}. \]
A point $x \in V$ is a singular point if the matrix $J$ has less than
full rank at $x$. $V$ is smooth if it has no singular points. The
\emph{codimension} of $V$ (i.e. $n-\dim V$) is equal to the rank of $J$ at nonsingular
points. This also coincides with the intuitive meaning of dimension (from differential geometry) as
applied to manifolds. If a variety on $\CC^n$ or $\PP^n$ has dimension $n-1$, we
call it a \emph{hypersurface}. Using the correspondence between ideals
and varieties, we can also define the dimension of an ideal $I$ as the
dimension of the associated affine variety $V(I)$.

The last basic notion we will need is the idea of ``genericity.'' To
define this precisely in the context of algebraic geometry, we need to
introduce the Zariski topology. This is the topology over $\CC^n$ or
$\PP^n$ in which the closed sets are precisely the algebraic
varieties. We say that a property over points in $\CC^n$ or $\PP^n$ is
\emph{generic} if it is true for a Zariski open dense subset of $\CC^n$. Note
that all Zariski closed sets are also closed in the standard
topology, and therefore all Zariski open sets are open in the standard
topology. So if a set is generic in the sense defined here, it is also
open and dense in $\CC^n$ under the standard topology.

\subsection{Gr\"{o}bner bases}
We noted above that a polynomial ideal can have many different
generating sets. However, there is a notion of a canonical generating
set, called a \emph{Gr\"{o}bner basis}, that is computationally
useful. To define it, we must first define the notion of a monomial
ordering.
\begin{definition}  
A monomial ordering is any total ordering $\prec$ on the set of
monomials satisfying the following:
\begin{enumerate}[(i)]
\item If $a \prec b$, then for any monomial $c$, $ac \prec bc$.
\item Any nonempty subset of monomials has a smallest element (the
  well-ordering property).
\end{enumerate}
\end{definition}
An important class of monomial orderings is the degree
orderings: these are the orderings in which if $deg(a) >
deg(b)$, then $a \succ b$.

Once we have chosen a monomial ordering, for any polynomial $f(x)$ we can define the
\emph{leading term} $\LT(f(x))$ as the monomial term in $f(x)$ that is
highest according to our chosen ordering. With these notions in place,
we can define the Gr\"{o}bner basis as follows.
\begin{definition}
  A collection of polynomials $\{g_1(x), \dots, g_k(x)\}$ is a
  \emph{Gr\"{o}bner basis} of an ideal $I$ if $I = \langle g_1(x), \dots,
  g_k(x) \rangle$ and 
  \[ \langle \LT(g_1(x)), \dots, \LT(g_k(x)) \rangle = \langle \{ \LT(f(x))
  : f(x) \in I \} \rangle. \]
  \label{def:groebner}
\end{definition}
Gr\"{o}bner bases were introduced by Buchberger~\cite{buchberger:70}, who showed that every ideal has a finite Gr\"{o}bner
basis, and gave an algorithm to compute this basis for any given
monomial ordering.

A key application of the Gr\"{o}bner basis is 
in the \emph{Gr\"{o}bner basis division algorithm}. The output of this
algorithm is described in the following proposition.
\begin{proposition}
  Let $f(x)$ be any polynomial, and $I$ be an ideal with a
  degree-ordered Gr\"{o}bner basis $\{g_1(x), \dots, g_k(x)\}$. If $D$ is the
  maximum degree of the Gr\"{o}bner basis elements, then there exists
  a unique decomposition $f(x) = \sum
  a_i(x) g_i(x) + u(x)$, where $\deg(a_i(x)) \leq
  \deg(f(x))$, and no term of $u(x)$ is divisible by the leading term
  of a Gr\"{o}bner basis element. Moreover, if $f(x) \in I$, then $u(x) = 0$.
  \label{prop:groebner_division}
\end{proposition}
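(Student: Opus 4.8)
The plan is to run the standard multivariate division algorithm of $f(x)$ by the ordered tuple $(g_1(x),\dots,g_k(x))$ --- the division algorithm of \cite{clo:1996}, Ch.~2 --- and then read off the three assertions from its behaviour. First I would set up the algorithm: maintain a ``working polynomial'' $p(x)$, initialized to $f(x)$, together with quotients $a_i(x)$ and a remainder $u(x)$, all initialized to $0$. At each step let $m=\LT(p(x))$; if $\LT(g_i(x))$ divides $m$ for some $i$, pick the least such $i$, put $t=m/\LT(g_i(x))$, add $t$ to $a_i(x)$, and replace $p(x)$ by $p(x)-t\,g_i(x)$; otherwise remove the term $m$ from $p(x)$ and add it to $u(x)$. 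In either case the term $m$ is cancelled and only terms that are $\prec m$ are introduced, so $\LT(p(x))$ strictly decreases in $\prec$ at every step; by the well-ordering axiom this descending chain is finite, so the algorithm terminates, at which point $p(x)=0$ and $f(x)=\sum_i a_i(x)g_i(x)+u(x)$ with no term of $u(x)$ divisible by any $\LT(g_i(x))$.

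The degree bound is the one step that genuinely uses that $\prec$ is a \emph{degree} ordering. For such an ordering the leading monomial of a polynomial has maximal total degree, so $\deg\LT(q)=\deg q$ for every $q$, and any monomial $\prec m$ has total degree at most $\deg m$. Hence $\deg\LT(p(x))$ never increases over the course of the algorithm and so stays $\le\deg f(x)$; consequently every monomial $m$ that is moved into $u(x)$, or used to form a quotient term $t=m/\LT(g_i(x))$, satisfies $\deg m\le\deg f(x)$. For the quotient terms this gives $\deg t=\deg m-\deg g_i(x)\le\deg f(x)-\deg g_i(x)$, and since $a_i(x)$ is a sum of such terms we obtain $\deg a_i(x)\le\deg f(x)-\deg g_i(x)\le\deg f(x)$ (and in fact $\deg(a_i(x)g_i(x))\le\deg f(x)$), as required; the monomials of $u(x)$ likewise have degree $\le\deg f(x)$.

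For the last two assertions I would invoke the defining property of a Gr\"obner basis, Definition~\ref{def:groebner}. Suppose $f(x)\in I$. Then $u(x)=f(x)-\sum_i a_i(x)g_i(x)\in I$, so if $u(x)\neq 0$ its leading term lies in $\langle\LT(h):h\in I\rangle=\langle\LT(g_1(x)),\dots,\LT(g_k(x))\rangle$; but a monomial belonging to an ideal generated by monomials is divisible by one of those monomial generators, contradicting that no term of $u(x)$ is divisible by any $\LT(g_i(x))$. Hence $u(x)=0$. The same observation yields uniqueness of the remainder: if $f(x)=\sum_i a_i(x)g_i(x)+u(x)=\sum_i a_i'(x)g_i(x)+u'(x)$ with both $u(x),u'(x)$ having no term divisible by any $\LT(g_i(x))$, then $u(x)-u'(x)\in I$ and each of its monomials occurs in $u(x)$ or in $u'(x)$, hence is divisible by no $\LT(g_i(x))$, which as before forces $u(x)-u'(x)=0$. (The quotients $a_i(x)$ need not be literally unique --- e.g.\ if two $g_i$ coincide --- but the canonical remainder $u(x)$ is, and the degree-bounded quotients produced by the algorithm are all that subsequent uses require.) I expect the only real obstacle to be bookkeeping: maintaining $\deg\LT(p(x))\le\deg f(x)$ as a loop invariant, since the degree bound --- the feature of this proposition actually used in the proof of \thmref{sos_cert} --- rests entirely on the ordering being a degree ordering rather than an arbitrary monomial order.
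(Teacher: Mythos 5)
Your proof is correct, and it is not a different route so much as a filling-in: the paper's proof of this proposition is a one-line citation to Proposition~1 in Section~2.6 and Theorem~3 in Section~2.3 of~\cite{clo:1996}, which are exactly the multivariate division algorithm and the Gr\"obner-basis remainder theorem you write out from scratch. Your loop-invariant argument that $\deg\LT(p(x))$ is nonincreasing under a degree ordering is the right way to extract the bound $\deg(a_i(x))\le\deg(f(x))$ (indeed the stronger $\deg(a_i(x)g_i(x))\le\deg(f(x))$) that the proposition needs and that the proof of Theorem~\ref{thm:sos_cert} later relies on, and your monomial-ideal argument for $u(x)=0$ when $f(x)\in I$, and for uniqueness of $u(x)$, is the standard one. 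Your parenthetical remark that only the remainder $u(x)$, not the quotients $a_i(x)$, is genuinely unique is a fair caveat about the literal wording of the proposition, and you correctly observe that this is harmless because only the remainder and the degree bounds are used downstream. No gaps.
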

\begin{proof}
  This is an immediate consequence of Proposition 1
  in Section 2.6 and Theorem 3 in Section 2.3 of~\cite{clo:1996}. 
\end{proof}

If an ideal is generated by homogeneous polynomials, then the
degree-ordered Groebner basis can also be taken to be homogeneous.
\begin{proposition}
  Let $\tilde{I} = \langle \tilde{h}_1(\tilde{x}), \dots \tilde{h}_k(\tilde{x}) \rangle$ be an ideal
  generated by homogeneous polynomials, and $\{g_1(\tilde{x}), \dots, g_k(\tilde{x})\}$ be a
  degree-ordered Gr\"{o}bner basis for $\tilde{I}$. If we let $g'_i(\tilde{x})$ be the
  highest-degree terms of $g_i(\tilde{x})$, then $\{g'_1(\tilde{x}),
  \dots, g'_k(\tilde{x})\}$ is also a degree-ordered Gr\"{o}bner basis for
  $I$.
  \label{prop:groebner_homo}
\end{proposition}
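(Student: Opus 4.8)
The plan is to leverage two elementary facts: in a degree ordering the leading monomial of any polynomial is one of its maximal-degree monomials, and a homogeneous ideal contains all the homogeneous components of each of its members. Write $g_i'(\tilde{x})$ for the top-degree homogeneous component of $g_i(\tilde{x})$, as in the statement.

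First I would show that $\LT(g_i(\tilde{x})) = \LT(g_i'(\tilde{x}))$ for each $i$. By the defining property of a degree ordering, a monomial of strictly larger degree is always $\succ$ a monomial of smaller degree; hence the $\prec$-maximal monomial occurring in $g_i(\tilde{x})$ must occur among the monomials of maximal degree in $g_i(\tilde{x})$, and those are precisely the monomials of $g_i'(\tilde{x})$. Since the monomials of $g_i'(\tilde{x})$ form a subset of those of $g_i(\tilde{x})$, the two polynomials share the same leading term. Next I would note that $g_i'(\tilde{x}) \in \tilde{I}$: the ideal $\tilde{I} = \langle \tilde{h}_1(\tilde{x}), \dots, \tilde{h}_k(\tilde{x})\rangle$ is homogeneous because its generators are, so every homogeneous component of any $p(\tilde{x}) \in \tilde{I}$ again lies in $\tilde{I}$; applying this to $p = g_i$ gives $g_i'(\tilde{x}) \in \tilde{I}$, and therefore $\langle g_1'(\tilde{x}), \dots, g_k'(\tilde{x})\rangle \subseteq \tilde{I}$.

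Now I would combine these. Using the first step, $\langle \LT(g_1'(\tilde{x})), \dots, \LT(g_k'(\tilde{x})) \rangle = \langle \LT(g_1(\tilde{x})), \dots, \LT(g_k(\tilde{x})) \rangle$, and the latter equals $\langle \LT(f(\tilde{x})) : f(\tilde{x}) \in \tilde{I} \rangle$ because $\{g_i(\tilde{x})\}$ is a Gr\"obner basis of $\tilde{I}$. Finally I invoke the standard leading-term-ideal criterion (the argument behind Theorem~3 of Section 2.5 of \cite{clo:1996}, i.e.\ repeatedly cancel leading terms against the $g_i'(\tilde{x})$ and terminate by the well-ordering property of $\prec$): any subset of an ideal whose leading terms generate the full leading-term ideal both generates the ideal and is a Gr\"obner basis of it. This supplies the reverse inclusion $\tilde{I} \subseteq \langle g_1'(\tilde{x}), \dots, g_k'(\tilde{x})\rangle$ and shows $\{g_i'(\tilde{x})\}$ is a degree-ordered Gr\"obner basis for $\tilde{I}$, as claimed.

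I do not anticipate a genuine obstacle; the only points deserving care are the justification that a degree ordering forces $\LT(g_i) = \LT(g_i')$, and resisting the temptation to take the generating claim for granted rather than deriving the reverse inclusion from the leading-term-ideal criterion.
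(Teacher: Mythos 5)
Your proof is correct, and it reaches the result by a somewhat different (and cleaner) route than the paper's. Both arguments begin with the same two observations: $\LT(g_i') = \LT(g_i)$ under a degree ordering, and the homogeneity of $\tilde{I}$ (so that homogeneous components of elements of $\tilde{I}$ lie in $\tilde{I}$). You then observe directly that $g_i' \in \tilde{I}$ and invoke the standard leading-term-ideal criterion — any subset of $I$ whose leading terms generate $\langle \LT(I)\rangle$ is automatically a Gr\"obner basis, and in particular generates $I$ — to finish in one step. The paper instead establishes the generating-set property by an explicit iterative replacement: it notes $P_d g_i \in \tilde{I}$ for each degree $d < \deg g_i$, applies the division algorithm (Proposition~\ref{prop:groebner_division}) to express $P_d g_i$ in terms of only the lower-degree basis elements, and concludes one can replace $g_i$ by $g_i - P_d g_i$ without destroying the generating property, iterating until only $g_i'$ remains. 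Your version avoids this bookkeeping entirely by using the leading-term criterion as a black box, which is both shorter and closer to textbook practice; the paper's version is more self-contained and makes the reduction mechanism visible. Either is fine, and the key lemma you cite (the leading-term criterion, Cox--Little--O'Shea~\cite{clo:1996}, Chapter 2) is exactly the right one — just double-check the section number, as it varies between editions (it is usually stated alongside the definition of Gr\"obner basis and the ascending-chain/Dickson's-lemma discussion).
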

\begin{proof}
  We need to show that $\{g'_1(\tilde{x}), \dots, g'_k(\tilde{x})\}$ is a
  generating set for $\tilde{I}$, and that the condition in
  Definition~\ref{def:groebner} still holds. The latter follows
  immediately from the fact that $\LT(g'_i(\tilde{x})) = \LT(g_i(\tilde{x}))$ for degree
  orderings. As for the former, suppose
  that $f(\tilde{x}) \in \tilde{I}$, meaning that $f(\tilde{x}) = \sum_{i}
  u_i(\tilde{x}) \tilde{h}_i(\tilde{x})$. Let $P_df$ denote the degree-$d$ terms of
  $f(\tilde{x})$. Then
  $P_d f(\tilde{x}) = \sum_i (P_{d - \deg(\tilde{h}_i)}u_i(\tilde{x}) ) \tilde{h}_i(\tilde{x})$, so
  $P_d f (\tilde{x}) \in \tilde{I}$. Now, for any Gr\"{o}bner basis element $g_i(\tilde{x})$,
  let $d < \deg(g_i(\tilde{x}))$. Since $P_d g_i(\tilde{x}) \in \tilde{I}$, by
  Proposition~\ref{prop:groebner_division}, $P_d g_i(\tilde{x}) = \sum a_{ij}(\tilde{x})
  g_j(\tilde{x})$, where the sum only contains Gr\"{o}bner basis elements with
  degree at most $d$. Since $d < \deg(g_i(\tilde{x}))$, this means in particular
  that this sum does \emph{not} include $g_i(\tilde{x})$. This implies that we
  can replace $g_i(\tilde{x})$ by $g_i(\tilde{x})
  - P_d g_i(\tilde{x})$, and still have a generating set for $\tilde{I}$. By repeatedly
  applying this process, we can replace each $g_i(\tilde{x})$ by $g'_i(\tilde{x})$ and still
  have a generating set. Thus, $\{g'_1(\tilde{x}), \dots, g'_k(\tilde{x})\}$ is indeed a
  Gr\"{o}bner basis for $\tilde{I}$.
\end{proof}

The dimension of an ideal is related to properties of its Gr\"{o}bner
basis. For ideals of any dimension, the following bound on the degree
of the Gr\"{o}bner basis was shown in~\cite{mayr:2010}.
\begin{proposition}
  For an $r$-dimensional
  ideal generated by polynomials of degree at most $d$ in $n$
  variables, with coefficients over any field, the Gr\"{o}bner basis in any ordering has degree
  upper-bounded by
  \[ 2\left(\frac{1}{2} d^{n-r} + d\right)^{2^r} . \]
  \label{prop:groebner_degree}
\end{proposition}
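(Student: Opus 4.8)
The plan is to invoke the analysis of Mayr and Ritscher~\cite{mayr:2010}, who prove exactly this dimension-sensitive degree bound; I would cite their theorem and, for orientation, recall the shape of the argument. The starting point is Dub\'e's classical doubly-exponential bound for Gr\"obner bases of arbitrary ideals, established via \emph{cone decompositions}. Fixing a term order and writing $N=\mathrm{in}(I)$ for the initial ideal, one decomposes both $N$ and the set of standard monomials $k[x]\setminus N$ as finite disjoint unions of cones $s\cdot k[U]$, where $s$ is a monomial and $U\subseteq\{x_1,\dots,x_n\}$ (the \emph{dimension} of the cone being $|U|$). A pigeonhole count over the data of such a decomposition then bounds $\deg(G)$ for the reduced Gr\"obner basis $G$ in terms of the maximal degree occurring in the decomposition.

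The key step is recursive: one passes from a cone decomposition in $n$ variables to one in fewer variables by splitting off coordinates, and at each level the degree bound roughly squares while a term of size $\sim d$ is added. Dub\'e's analysis recurses through all $n$ variables, producing the exponent $2^{n-1}$. The refinement of Mayr and Ritscher is that the Krull dimension $r=\dim I$ shows up combinatorially as a bound on the dimension of the cones that genuinely require recursive treatment: the ``transverse'' $(n-r)$-dimensional part of the problem can be controlled by a single B\'ezout-type estimate, contributing the factor $d^{n-r}$ at the base of the tower, while only the remaining $\le r$ directions drive the doubly-exponential recursion. Tracking the constants through this shortened recursion of depth $r$ yields the stated bound $2\bigl(\tfrac12 d^{n-r}+d\bigr)^{2^r}$, uniformly over the ground field, since the whole argument is combinatorial and characteristic-free.

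I expect the main obstacle to be precisely this last point: showing rigorously that the cone decompositions can be organized so that only cones of dimension at most $r$ feed back into the recursion, which requires careful bookkeeping of how cone dimensions evolve under splitting, together with the use of the hypothesis $\dim I = r$ to cap them. The remaining ingredients --- existence of exact cone decompositions, the pigeonhole bound relating decomposition degree to Gr\"obner basis degree, and field-independence --- are essentially Dub\'e's machinery carried over unchanged. Since all of this is carried out in full in~\cite{mayr:2010}, for our purposes it is enough to quote their theorem as a black box.
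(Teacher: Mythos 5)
Your proposal is correct and takes essentially the same route as the paper: the paper also states Proposition~\ref{prop:groebner_degree} as a direct quotation of Mayr and Ritscher~\cite{mayr:2010} with no proof given. Your additional sketch of Dub\'e's cone-decomposition machinery and of how the Krull dimension $r$ shortens the recursion is accurate and helpful orientation, but since both you and the authors ultimately treat the result as a black-box citation, there is no substantive difference in approach.
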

In the special case of zero-dimensional ideals, we further have the
following property:
\begin{proposition}
  Let $I$ be an ideal and $\{g_1(x), \dots, g_k(x)\}$ a Gr\"{o}bner basis
  for $I$. Then $I$ is zero-dimensional iff for every variable $x_i$, there
  exists $m_i \geq 0$ such that $x_i^{m_i} = \LT(g(x))$ for some
  element $g(x)$ in the Gr\"{o}bner basis. 
\end{proposition}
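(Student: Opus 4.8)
The plan is to prove the two implications separately, using only facts already available in the excerpt together with Hilbert's Nullstellensatz (see~\cite{clo:1996}). Throughout, write $\LT(I)$ for the leading-term ideal $\langle\{\LT(f(x)):f(x)\in I\}\rangle$, which by Definition~\ref{def:groebner} equals $\langle\LT(g_1(x)),\dots,\LT(g_k(x))\rangle$ since the $g_j$ form a Gr\"obner basis. Call a monomial \emph{standard} if it does not lie in $\LT(I)$. By Proposition~\ref{prop:groebner_division}, the normal-form map $f(x)\mapsto u(x)$ is $\CC$-linear with kernel exactly $I$ and image the span of the standard monomials, so the standard monomials descend to a $\CC$-vector space basis of $\CC[x_1,\dots,x_n]/I$.

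\emph{Suppose the monomial condition holds}, i.e.\ for each $i$ there is $m_i\geq 0$ and a basis element $g(x)$ with $\LT(g(x))=x_i^{m_i}$; in particular $x_i^{m_i}\in\LT(I)$. If a monomial $x_1^{\alpha_1}\cdots x_n^{\alpha_n}$ had $\alpha_i\geq m_i$ for some $i$, then $x_i^{m_i}$ would divide it and it would lie in the monomial ideal $\LT(I)$; hence every standard monomial satisfies $\alpha_i<m_i$ for all $i$, and there are at most $\prod_i m_i$ of them. Therefore $N:=\dim_\CC\CC[x_1,\dots,x_n]/I<\infty$. For each $i$, the $N+1$ residues of $1,x_i,\dots,x_i^N$ must be linearly dependent, yielding a nonzero univariate $p_i(x_i)\in I$; since $\CC$ is algebraically closed, $p_i$ has finitely many roots, and $V(I)$ is contained in the product of these root sets over $i=1,\dots,n$. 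Thus $V(I)$ is finite, i.e.\ $I$ is zero-dimensional.

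\emph{Conversely, suppose $I$ is zero-dimensional}, so $V(I)$ is finite. Fix a variable $x_i$ and let $c_1,\dots,c_s$ be the distinct values of the $i$-th coordinate on $V(I)$ (if $V(I)=\emptyset$ then $s=0$ and $1\in I$ by the weak Nullstellensatz). The univariate polynomial $q(x_i):=\prod_{j=1}^s(x_i-c_j)$ vanishes on $V(I)$, so by the Nullstellensatz $q(x_i)^\ell\in I$ for some $\ell\geq 1$. In any monomial ordering one has $x_i^a\prec x_i^b$ whenever $a<b$ (since $1$ is the minimal monomial by the well-ordering property, and multiplying by $x_i^a$ preserves the order), so the leading term of the nonzero univariate polynomial $q(x_i)^\ell$ is a scalar times $x_i^{s\ell}$; hence $x_i^{s\ell}\in\LT(I)$. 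Because $\LT(I)=\langle\LT(g_1(x)),\dots,\LT(g_k(x))\rangle$ is a monomial ideal, the monomial $x_i^{s\ell}$ is divisible by some generator $\LT(g_j(x))$, and any divisor of a power of $x_i$ is again a power of $x_i$; so $\LT(g_j(x))=x_i^{m_i}$ with $m_i\leq s\ell$, giving the required basis element.

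This is the classical finiteness criterion (see Chapter 5 of~\cite{clo:1996}), and no step is genuinely hard. The two places deserving attention are the passage from $\dim_\CC\CC[x_1,\dots,x_n]/I<\infty$ to the finiteness of $V(I)$, which genuinely uses that we work over an algebraically closed field (via the existence of the univariate polynomials $p_i$), and the elementary observation that a pure power of $x_i$ lying in a monomial ideal is divisible by a monomial generator that is itself a pure power of $x_i$; the degenerate case $V(I)=\emptyset$ is covered by allowing $m_i=0$.
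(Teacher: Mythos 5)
Your proof is correct. The paper's own proof is a one-line citation to the equivalence (i) $\iff$ (iii) in Theorem 6, Chapter 5 of Cox--Little--O'Shea~\cite{clo:1996}; you reconstruct the underlying argument from first principles. In the forward direction you use that the standard monomials form a $\CC$-basis of $\CC[x_1,\dots,x_n]/I$, conclude finite dimensionality when every $x_i$ has a pure-power leading term in the basis, extract a univariate $p_i(x_i)\in I$ for each $i$ by linear dependence, and use algebraic closedness of $\CC$ to make $V(I)$ finite. Conversely, you interpolate a univariate polynomial vanishing on the $i$-th coordinates of the finite set $V(I)$, use the Nullstellensatz to get a power of it inside $I$, and observe that its leading term is a pure power of $x_i$, which in the monomial ideal $\LT(I)$ must be divisible by some $\LT(g_j)$ that is itself a pure power of $x_i$. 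This buys a self-contained proof in place of a citation, at no loss of generality. One small mismatch worth flagging: Proposition~\ref{prop:groebner_division} as stated in the appendix assumes a degree ordering, while the present proposition (and your argument) permit any monomial ordering. This is harmless---the existence and uniqueness of the remainder you invoke hold for every monomial ordering; the degree-ordering hypothesis in Proposition~\ref{prop:groebner_division} is only used there to guarantee the bound $\deg(a_i)\leq\deg(f)$, which your proof never needs.
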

\begin{proof}
  This is the equivalence (i) $\iff$ (iii) in Theorem 6 of Chapter 5
  of~\cite{clo:1996}.
\end{proof}
This result enables us to bound the degree of the remainder term $u$
in Proposition~\ref{prop:groebner_division} above, when the ideal is
zero dimensional.
\begin{proposition}
  If $I$ is a zero-dimensional ideal over $n$ variables, and it has a
  degree-order Gr\"{o}bner basis whose
  maximum total degree is $D$, then the remainder $u(x)$ in
  Proposition~\ref{prop:groebner_division} has degree at most
  $n(D-1)$.
  \label{prop:groebner_zerod_division}
\end{proposition}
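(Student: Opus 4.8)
The plan is to read the bound directly off the combinatorial description of the remainder. Recall from Proposition~\ref{prop:groebner_division} that $u(x)$ is characterized by the fact that no monomial appearing in it is divisible by the leading term of any Gr\"{o}bner basis element. So it suffices to bound the total degree of such monomials.

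First I would invoke the preceding proposition (the equivalence (i)~$\iff$~(iii) of Theorem~6 of Chapter~5 of~\cite{clo:1996}): since $I$ is zero-dimensional, for each variable $x_i$ there is some Gr\"{o}bner basis element $g$ with $\LT(g) = x_i^{m_i}$ for some integer $m_i \geq 0$. Because we are working with a degree ordering, the leading term of $g$ has degree equal to $\deg(g)$, and $\deg(g) \leq D$ by hypothesis; hence $m_i \leq D$. (If $m_i = 0$ for some $i$ then $1 \in I$, so $I$ is the whole ring and the statement is vacuous.)

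Next, since no monomial of $u(x)$ is divisible by $x_i^{m_i}$, the exponent of $x_i$ in every monomial of $u(x)$ is at most $m_i - 1 \leq D - 1$. Summing over the $n$ variables, every monomial appearing in $u(x)$ has total degree at most $\sum_{i=1}^{n} (m_i - 1) \leq n(D-1)$, which is the claimed bound. Equivalently, one can phrase this in terms of the standard monomial basis of $\CC[x_1,\dots,x_n]/I$, which consists exactly of the monomials not lying in the monomial ideal $\langle \LT(g_1),\dots,\LT(g_k)\rangle$; this "staircase" set is contained in $\{x^\alpha : \alpha_i \leq m_i - 1 \text{ for all } i\}$, and $u(x)$ is a $\CC$-linear combination of these basis monomials.

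There is essentially no real obstacle here; the only point that deserves a moment's care is that in a degree ordering one has $\deg(\LT(g)) = \deg(g)$, so that the pure-power leading terms guaranteed by zero-dimensionality have exponent bounded by $D$ rather than by something larger. With that observed, the estimate is immediate.
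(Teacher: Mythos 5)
Your proof is correct and follows essentially the same route as the paper: both appeal to the zero-dimensionality criterion to obtain pure-power leading terms $x_i^{m_i}$ with $m_i \leq D$ (using that in a degree ordering $\deg(\LT(g)) = \deg(g)$), and both bound the remainder's monomials by the condition that no exponent of $x_i$ may reach $m_i$. The paper states this as a contradiction argument while you phrase it directly, but the underlying combinatorics are identical; your version is if anything slightly more careful in spelling out the bound $m_i \leq D$ and the staircase picture.
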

\begin{proof}
  Suppose $u(x)$ contains a term with degree greater than
  $n(D-1)$. Then this term would be divisble by $x_i^D$ for some
  variable $x_i$. However, since $I$ is zero dimensional, by the above
  proposition there exists a Gr\"{o}bner basis element $g_j$ whose
  leading term is $x_j^k$ for some $k < D$. Thus, we have found a term
  in $u(x)$ that is divisble by the leading term of a Gr\"{o}bner
  basis element, which contradicts
  Proposition~\ref{prop:groebner_division}.
\end{proof}
\subsection{Intersections of varieties}\label{subsec:alggeo_intersection}
Finally, we include two important theorems concerning the intersections
of projective algebraic varieties. In full generality these theorems
are much more powerful than we need; the statements we give here are
tailored for our use, and are based on those in~\cite{nie:2009}. The first theorem is B\'{e}zout's Theorem, 
which says that two projective varities of sufficiently high dimension
must intersect (the full version also bounds the number of components
in the intersection):
\begin{theorem}[B\'{e}zout]
  Suppose $\UU$ and $\VV$ are projective varieties in $\PP^n$, and
  $\dim(\UU) + \dim(\VV) \geq n$. Then $\UU$ and $\VV$ have a nonempty
  intersection.
  \label{thm:bezout}
\end{theorem}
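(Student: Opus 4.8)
The plan is to deduce the statement from the affine dimension theorem by passing to affine cones. First I would reduce to the irreducible case: both $\UU$ and $\VV$ must be nonempty (otherwise $\dim\UU + \dim\VV < n$), and writing them as unions of irreducible components $\UU = \bigcup_i \UU_i$, $\VV = \bigcup_j \VV_j$, some component $\UU_i$ has $\dim\UU_i = \dim\UU$ and some $\VV_j$ has $\dim\VV_j = \dim\VV$. Since $\UU_i \cap \VV_j \subseteq \UU\cap\VV$, it suffices to prove the claim when $\UU$ and $\VV$ are irreducible.

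Next I would pass to affine cones. Let $\widehat\UU, \widehat\VV \subseteq \CC^{n+1}$ denote the affine cones over $\UU$ and $\VV$, i.e. the preimages under the projection $\CC^{n+1}\setminus\{0\} \to \PP^n$ together with the origin. These are irreducible affine varieties of dimensions $\dim\UU + 1$ and $\dim\VV + 1$ respectively, and both contain the origin $0$. The key input is the affine dimension theorem: if $X, Y$ are affine varieties in $\CC^N$, then every irreducible component of $X\cap Y$ has dimension at least $\dim X + \dim Y - N$. Applying this with $N = n+1$, the component of $\widehat\UU \cap \widehat\VV$ through $0$ has dimension at least $(\dim\UU + 1) + (\dim\VV + 1) - (n+1) = \dim\UU + \dim\VV - n + 1 \geq 1$. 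Hence $\widehat\UU \cap \widehat\VV$ properly contains $\{0\}$, so it contains some $v \neq 0$; the image of $v$ in $\PP^n$ lies in $\UU \cap \VV$, which is therefore nonempty.

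To justify the affine dimension theorem itself, the standard route is the reduction to the diagonal: $X \cap Y$ is isomorphic to $(X\times Y)\cap\Delta$, where $\Delta \subseteq \CC^N\times\CC^N$ is the diagonal, which is cut out by the $N$ equations $x_i - y_i = 0$. Since $\dim(X\times Y) = \dim X + \dim Y$ and, by Krull's principal ideal theorem, intersecting an affine variety with a single hypersurface lowers the dimension of each component by at most one, cutting by these $N$ equations leaves every component of dimension at least $\dim X + \dim Y - N$.

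The main obstacle is purely the commutative-algebra input — Krull's Hauptidealsatz and the dimension inequality it yields — together with the routine bookkeeping that the cone over an irreducible projective variety is irreducible and of dimension one larger. None of this is deep, but the affine dimension theorem is the one genuinely non-elementary ingredient; the cone reduction and the rest are formal. One could equivalently cite the intersection-dimension statement from a standard reference such as \cite{harris:1992} and omit the cone argument.
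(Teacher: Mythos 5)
The paper does not actually prove this statement: Theorem~\ref{thm:bezout} is presented in Appendix~\ref{subsec:alggeo_intersection} as a standard result, with the remark that the statements there are ``tailored for our use, and are based on those in~\cite{nie:2009},'' with the underlying algebraic geometry drawn from references such as~\cite{harris:1992}. So there is no in-paper argument to compare against.

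Your proposal is correct and is essentially the standard textbook proof of this form of the statement (which is really the projective dimension theorem rather than Bézout's theorem proper, as the paper itself hints by noting that the ``full version also bounds the number of components''). The reduction to irreducible components, the passage to affine cones $\widehat{\UU}, \widehat{\VV} \subseteq \CC^{n+1}$ both containing the origin, the application of the affine dimension theorem to get a component of $\widehat{\UU}\cap\widehat{\VV}$ through $0$ of dimension at least $\dim\UU + \dim\VV - n + 1 \geq 1$, and the observation that any nonzero point of that component projects to a point of $\UU\cap\VV$ in $\PP^n$ — all of this is precisely the argument one finds in standard references such as~\cite{harris:1992}. Your sketch of the affine dimension theorem via the diagonal $\Delta\subseteq\CC^N\times\CC^N$ and Krull's Hauptidealsatz is likewise the standard route. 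One small point you get right and which is worth making explicit: the affine dimension theorem is usually stated for irreducible varieties, so the initial reduction to irreducible components (together with the fact that the cone over an irreducible projective variety is itself irreducible and of dimension one higher) is load-bearing rather than cosmetic. The only potential nitpick is the convention that the empty variety has dimension $-1$ (or $-\infty$), which you implicitly use to rule out $\UU$ or $\VV$ being empty; this is standard but deserves a word if the proof were written out in full.
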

The second theorem is Bertini's Theorem. Roughly, this states that the intersection
of a smooth variety with a ``generic'' hypersurface is also a smooth
variety with dimension $1$ lower. The precise statement is:
\begin{theorem}[Bertini]
  Let $\UU$ be a $k$-dimensional smooth projective variety in $\PP^n$, and $\HH$ a
  family of hypersurfaces in $\PP^n$ parametrized by coordinates in a projective
  space $\PP^m$. If there are no points common to all the
  hypersurfaces in $\HH$, then for generic $\AA \in \HH$, the intersection
  $\UU \cap \AA$ is smooth and has dimension $k - 1$.
  \label{thm:bertini}
\end{theorem}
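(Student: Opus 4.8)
The plan is to deduce this specialized Bertini statement from the classical one by recognizing $\HH$, once restricted to $\UU$, as a base-point-free linear system. A family of hypersurfaces in $\PP^n$ parametrized by $\PP^m$ is by definition a linear system, so we may write its members as $\AA_t = \{\tilde x \in \PP^n : \sum_{i=0}^m t_i G_i(\tilde x) = 0\}$ for $t = (t_0 : \cdots : t_m) \in \PP^m$, with $G_0,\dots,G_m$ homogeneous of a common degree $e \geq 1$. The hypothesis that no point of $\PP^n$ lies on every member says precisely that $G_0,\dots,G_m$ have no common zero, so $\phi : \PP^n \to \PP^m$, $\tilde x \mapsto (G_0(\tilde x) : \cdots : G_m(\tilde x))$, is a morphism; set $\psi := \phi|_\UU : \UU \to \PP^m$. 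With the hyperplane $L_t := \{y \in \PP^m : \sum_i t_i y_i = 0\}$ one has $\UU \cap \AA_t = \psi^{-1}(L_t)$, and $t \mapsto L_t$ is an isomorphism $\PP^m \xrightarrow{\ \sim\ } (\PP^m)^\vee$; so it suffices to show that for generic $L \subseteq \PP^m$ the preimage $\psi^{-1}(L)$ is smooth of dimension $k-1$.

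First I would pin down the dimension, which also furnishes a non-degeneracy fact used below. The case $k=0$ is trivial (a generic hyperplane avoids the finitely many points of $\psi(\UU)$), so assume $k \geq 1$; then $\psi$ is necessarily non-constant. Indeed, if $\psi$ were constant, all $G_i|_\UU$ would be scalar multiples of some $G_{i_0}|_\UU$, and hence the common zero locus of $G_0,\dots,G_m$ inside $\UU$ would be either all of $\UU$ or the section $\{G_{i_0}=0\}\cap\UU$; the latter is nonempty by B\'ezout's Theorem (\thmref{bezout}) because $(n-1)+k \geq n$, so in either case some point of $\PP^n$ would lie on every member of $\HH$, contradicting the hypothesis. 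Consequently $\psi(\UU)$ (closed, since $\UU$ is projective) has dimension $\geq 1$, so by B\'ezout every hyperplane $L$ meets it and $\psi^{-1}(L)\neq\emptyset$; moreover for generic $t$ we have $\UU\not\subseteq\AA_t$, since $\{t : \sum_i t_i G_i \equiv 0 \text{ on } \UU\}$ is a proper linear subspace of $\PP^m$. Hence for generic $L$, $\psi^{-1}(L) = \UU\cap\AA$ is a nonempty proper closed subvariety of $\UU$ cut out by a single equation, so it has dimension exactly $k-1$.

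For the smoothness I would run the incidence-variety proof of classical Bertini. Consider
\[
  \Sigma \;=\; \{\, (\tilde x, L) \in \UU \times (\PP^m)^\vee : \psi(\tilde x) \in L \,\}.
\]
The first projection $\Sigma \to \UU$ is a $\PP^{m-1}$-bundle over the smooth $\UU$, so $\Sigma$ is smooth of dimension $k+m-1$; the second projection $p : \Sigma \to (\PP^m)^\vee$ is proper, and surjective because its image $\{L : L\cap\psi(\UU)\neq\emptyset\}$ is all of $(\PP^m)^\vee$ once $\dim\psi(\UU)\geq 1$, while its fiber over $L$ is isomorphic to $\UU\cap\AA$. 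Since we work over $\CC$, generic smoothness applies: there is a Zariski-dense open $V \subseteq (\PP^m)^\vee$ over which $p$ is a smooth morphism, so every fiber over $V$ is a smooth variety of dimension $\dim\Sigma - m = k-1$. (Alternatively, over $\CC$ one argues analytically: $\Sigma$ is a compact complex manifold and $p$ is holomorphic, so by Sard's theorem a generic $L$ is a regular value and $p^{-1}(L)$ is a smooth submanifold.) As $p$ is proper its critical values form a closed set, so $V$ is open and dense; transporting back through $\PP^m \xrightarrow{\ \sim\ } (\PP^m)^\vee$ yields a Zariski-dense open set of parameters $t$ for which $\UU\cap\AA_t$ is smooth of dimension $k-1$, as claimed. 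The one step I would treat as a black box rather than reprove is the underlying classical Bertini theorem (generic member of a base-point-free linear system on a smooth variety in characteristic $0$ is smooth of one lower dimension), and the only subtlety peculiar to this formulation is that $\psi$ turns out to be automatically non-constant, which is exactly where the no-common-base-point hypothesis is used a second time.
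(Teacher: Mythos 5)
The paper does not prove this theorem: it is quoted as a known tool, with attribution to Nie and Ranestad~\cite{nie:2009}, so there is no paper proof to compare against. Your argument is a correct, self-contained proof, and it follows the standard textbook route (base-point-free linear system $\Rightarrow$ morphism $\psi:\UU\to\PP^m$, incidence variety $\Sigma$ over $\UU\times(\PP^m)^\vee$, smoothness of $\Sigma$ as a $\PP^{m-1}$-bundle, then generic smoothness of the projection to $(\PP^m)^\vee$), which is presumably what the cited reference has in mind.

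Two small remarks. First, you read ``family of hypersurfaces parametrized by coordinates in a projective space $\PP^m$'' as a linear system $\sum_i t_i G_i = 0$; the statement as written does not literally force linearity in the parameters, but this is exactly how $\HH$ is used in the paper (the hypersurface polynomial $\tilde f_0 - \mu x_0^{2d}$ is linear in the parameters $(M,\mu)$), so the interpretation is the intended one and worth flagging as an assumption. Second, in the non-constancy step your claim that ``the common zero locus of $G_0,\dots,G_m$ inside $\UU$ would be the section $\{G_{i_0}=0\}\cap\UU$'' is phrased slightly at cross purposes: once $\psi$ is constant with $c_{i_0}\neq 0$, well-definedness of $\psi$ forces $G_{i_0}$ to be nowhere zero on $\UU$, so $\{G_{i_0}=0\}\cap\UU=\emptyset$ and B\'ezout already gives the contradiction directly; the detour through ``some point would be a common base point'' is harmless but not needed. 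The dimension count (generic member does not contain $\UU$, nonempty by B\'ezout, cut out by one equation, so codimension one by Krull) and the smoothness via generic smoothness of the proper surjection $p:\Sigma\to(\PP^m)^\vee$ are both sound.
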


\end{document}